\providecommand{\keywords}[1]
{ \medskip  
  \noindent
  \small	
  \textbf{\textit{Keywords---}} #1
}
\newtheorem{proposition}{Proposition}
\newtheorem{theorem}{Theorem}
\newtheorem{definition}{Definition}
\newtheorem{corollary}{Corollary}
\newtheorem{fact}{Fact}
\newcommand{\problemtitle}[1]{\gdef\@problemtitle{#1}}
\newcommand{\probleminput}[1]{\gdef\@probleminput{#1}}
\newcommand{\problemquestion}[1]{\gdef\@problemquestion{#1}}
  \par\addvspace{.5\baselineskip}
  \par\addvspace{.5\baselineskip}
\newcommand{\revised}[1]{\textcolor{black}{#1}}
\newcommand{\rerevised}[1]{\textcolor{black}{#1}}
\newcommand\sw{{\sf sw}}
\newcommand\vc{{\gamma}}
\newcommand\pos{{\sf PoS}}
\newcommand\pof{{\sf PoF}}
\newcommand\occ{{\sf mcc}}
\newcommand\problemname{{\sf Local $k$-STA }}
\newcommand\agent{p}
\newcommand\agentt{q}
\newcommand\agents{\mathbf{P}}
\newcommand\preference{f}
\newcommand\preferencefamily{\cF}
\newcommand\arrangement{\pi}
\newcommand\utility{U}
\newcommand\cF{\mathcal{F}}
\newcommand\cI{\mathcal{I}}
\newcommand\cO{O}
\newcommand\yes{\textsc{Yes}}
\newcommand\card[1]{|#1|}
\newcommand\bfC{\mathbf{C}}
\newcommand\bfc{\mathbf{c}}
\newcommand\bfV{\mathbf{V}}
\newcommand\bfx{\mathbf{x}}
\newcommand\bfY{\mathbf{Y}}
\newcommand\bfS{\mathbf{S}}
\newcommand\largeval{n}
\newcommand\ETH{{\rm ETH}}
\newtheorem{nestedclaim}{Claim}
\newtheorem{nestedobservation}[nestedclaim]{Observation}
\newenvironment{claimproof}{\begin{proof}\renewcommand{\qedsymbol}{\claimqed}}{\end{proof}\renewcommand{\qedsymbol}{\plainqed}}
\let\plainqed\qedsymbol
\newcommand*{\eqdef}{=\mathrel{\vcenter{\baselineskip0.5ex \lineskiplimit0pt
                     \hbox{\scriptsize.}\hbox{\scriptsize.}}}%
                     }
\title{Hedonic Seat Arrangement Problems\thanks{An extended abstract of this article can be found in \emph{Proceedings of the 19th International Conference on Autonomous Agents and Multiagent Systems (AAMAS 2020), pp.~1777--1779}. This work was partially supported by JSPS KAKENHI Grant Numbers JP18H04091, 
JP18K11168, 
JP18K11169, 
JP19K21537, 
JP20H05967, 
JP21K17707, 
JP21H05852, 
JP22H00513, 
JP23H04388,
JP24H00697, 
JP25K03076, 
JP25K03077, 
and  JST, CRONOS, Japan Grant Number JPMJCS24K2.
}}
\author[1]{Hans L. Bodlaender}
\author[2]{Tesshu Hanaka}
\author[3]{Lars Jaffke}
\author[4]{Hirotaka Ono}
\author[4]{Yota Otachi}
\author[5]{Tom C. van der Zanden}
\affil[1]{Utrecht University, Utrecht, The Netherlands\\ \texttt{h.l.bodlaender@uu.nl}}
\affil[2]{Kyushu University, Fukuoka, Japan\\ \texttt{hanaka@inf.kyushu-u.ac.jp}}
\affil[3]{NHH Norwegian School of Economics, Bergen, Norway\\ \texttt{lars.jaffke@nhh.no}}
\affil[4]{Nagoya University, Nagoya, Japan\\ \texttt{\{ono,otachi\}@nagoya-u.jp}}
\affil[5]{Maastricht University, Maastricht, The Netherlands\\ \texttt{T.vanderZanden@maastrichtuniversity.nl}}
\date{}
\begin{document}



\maketitle              
%

\begin{abstract}

In this paper, we study a variant of hedonic games, called \textsc{Seat Arrangement}. The model is defined by a bijection from agents with preferences for each other to vertices in a graph $G$. The utility of an agent depends on the neighbors assigned in the graph. More precisely, it is the sum over all neighbors of the preferences that the agent has towards the agent assigned to the neighbor. We first consider the price of stability and fairness for different classes of preferences. In particular, we show that there is an instance such that the price of fairness ({\sf PoF}) is unbounded in general. Moreover, we show an upper bound $\tilde{d}(G)$ and an almost tight lower bound $\tilde{d}(G)-1/4$ of {\sf PoF}, where $\tilde{d}(G)$ is the average degree of an input graph. Then we investigate the computational complexity of problems to find certain ``good'' seat arrangements, say  \textsc{Utilitarian Arrangement}, \textsc{Egalitarian Arrangement}, \textsc{Stable Arrangement}, and \textsc{Envy-free Arrangement}. We give dichotomies of computational complexity of four \textsc{Seat Arrangement} problems from the perspective of the maximum order of connected components in an input graph. For the parameterized complexity, \textsc{Utilitarian Arrangement} can be solved in time $n^{O(\gamma)}$, while it cannot be solved in time $f(\gamma)n^{o(\gamma)}$ under ETH, where $n$ is the number of agents and $\gamma$ is the vertex cover number of an input graph. Moreover, we show that \textsc{Egalitarian Arrangement} and \textsc{Envy-free Arrangement} are weakly NP-hard even on graphs of bounded vertex cover number. Finally, we prove that determining whether a stable arrangement can be obtained from a given arrangement by $k$ swaps is W[1]-hard when parameterized by $k+\gamma$, whereas it can be solved in time $n^{O(k)}$.

\keywords{hedonic game; stability; envy-freeness; fairness; computational complexity; parameterized complexity; local search}
\end{abstract}

\section{Introduction}

Given a set of $n$ agents with preferences for each other and an $n$-vertex graph, called the {\em seat graph}, we consider assigning each agent to a vertex in the graph.
Each agent has a utility that depends on the neighbors in the graph.
Intuitively, if a neighbor is preferable to the agent, his/her utility is high.
This models several situations such as seat arrangements in classrooms, offices, restaurants, or vehicles.
Here, a vertex corresponds to a seat and an assignment corresponds to a seat arrangement.
If we arrange seats in a classroom, the seat graph is a grid. As another example, if we consider a round table in a restaurant, the seat graph is a cycle.
We name the model \textsc{Seat Arrangement}. 

\textsc{Seat Arrangement} is related to hedonic games \cite{Handbook2016}. If the seat graph in a \textsc{Seat Arrangement} instance is a disjoint union of cliques, then each clique can
be viewed as a potential coalition. Hence an arrangement on that graph naturally corresponds to a coalition forming.\footnote{Because the size of each coalition in a hedonic game with $n$ agents is not designated in general, one may think the size of each clique in a \textsc{Seat Arrangement} instance cannot be determined. However, this can be done by taking
$n$ cliques of size $n$ and $n^2-n$ neutral agents.} \revised{In that sense, this model is a generalization of hedonic games.}



In this paper, we consider the following problems to find four types of {\em desirable} seat arrangements: \textsc{Utilitarian Arrangement} ({\sf UTA}), \textsc{Egalitarian Arrangement} ({\sf EGA}), \textsc{Stable Arrangement} ({\sf STA}), and \textsc{Envy-free Arrangement} ({\sf EFA}).
{\sf UTA} is the problem to find a seat arrangement that maximizes the sum of utilities of agents, which is called the {\em social welfare}.

The concept of {\sf UTA} is a macroscopic optimality, and hence it may ignore individual utilities.
Complementarily, {\sf EGA} is the problem to find a seat arrangement that maximizes the least utility of an agent.
From the viewpoint of economics, the maximin utility of an arrangement can be interpreted as a measure of fairness~\cite{Bertsimas2011,Caragiannis2012,KerkmannNRRRSW22}.



Stability is one of the central topics in the field of hedonic games including \textsc{Stable Matching}~\cite{GS1962,Noam2007,Aziz2013,Handbook2016}.
Motivated by this, we define a {\em stable} arrangement as an arrangement with no pair of agents that has an incentive of swapping their seats (i.e., vertices), called a {\em blocking pair}.
This corresponds to the definition of {\em exchange-stability} proposed by Alcalde in the context of stable matchings \cite{Alcalde1994}.
In \textsc{Seat Arrangement}, {\sf STA} is the problem of deciding whether there is a stable arrangement in a graph. 

Finally, we consider the envy-freeness of \textsc{Seat Arrangement}.
The envy-freeness is also a natural and well-considered concept in hedonic games.
In \textsc{Seat Arrangement}, an agent $p$ {\em envies} another agent $q$ if $p$ has an incentive of swapping its seat with $q$.
Note that $q$ may not have an incentive of swapping its seat with $p$.
By definition, any envy-free arrangement is stable.


\begin{table}[t]
\centering
\caption{The existence of a stable arrangement and the price of stability for \textsc{Seat Arrangement}~\cite{Massand2019}.}
\label{result:stable}
\begin{tabular}{lllll}
\cline{2-3}
\multicolumn{1}{l|}{}            & \multicolumn{1}{c|}{General}       & \multicolumn{1}{c|}{Symmetric}              &  &  \\ \cline{1-3}
\multicolumn{1}{|c|}{Weighted}   & \multicolumn{1}{c|}{-}             & \multicolumn{1}{c|}{always exists, $\pos = 1$} &  &  \\ \cline{1-3}
\multicolumn{1}{|c|}{Unweighted} & \multicolumn{1}{c|}{may not exist} & \multicolumn{1}{c|}{-}                      &  &  \\ \cline{1-3}
\end{tabular}
\end{table}

\begin{table}[t]
\centering
\caption{The price of fairness for \textsc{Seat Arrangement}.}
\label{result:pof}
\begin{tabular}{c|c|c|}
\cline{2-3}
                             & General  & Binary                 \\ \hline
\multicolumn{1}{|l|}{$\pof$} & $\infty$ & $\theta(\tilde{d}(G))$ \\ \hline
\end{tabular}
\end{table}

\subsection{Our contribution}
In this paper, we first investigate the  price of stability (\pos) and the  price of fairness (\pof) of \textsc{Seat Arrangement}, which are defined as  the ratio of the maximum social welfare over the social welfare of a maximum stable arrangement and a maximin arrangement, respectively.
For the price of stability, it is shown that $\pos$ is 1 under symmetric preferences by a result in \cite{Massand2019}.
For the price of fairness, we show that there is a family of instances such that $\pof$ is unbounded.
For the binary case, we show an upper bound of $\tilde{d}(G)$ of $\pof$, where $\tilde{d}(G)$ is the average degree of the seat graph $G$.
On the other hand, we present an almost tight lower bound $\tilde{d}(G)-1/4$ of $\pof$.
Furthermore, we give a lower bound $\tilde{d}(G)/2+1/12$ for the cases with symmetric preferences.
\revised{Tables \ref{result:stable} and \ref{result:pof} show the price of stability and the  price of fairness of \textsc{Seat Arrangement}.}


Next, we give dichotomies of the computational complexity of four \textsc{Seat Arrangement} problems from the perspective of the maximum order of connected components in the seat graph.
\revised{For  {\sf UTA} and {\sf EGA}, we show that they are solvable in polynomial time if the order of each connected component in the seat graph is at most 2 whereas they are NP-hard even if the order of each connected component of the seat graph is 3.}
Since it can be shown that an arrangement with maximum social welfare is always stable under symmetric preferences, {\sf STA} under symmetric preferences can also be solved in polynomial time if the order of each connected component is at most 2. 
\revised{However, the problem is NP-complete on such graphs without the symmetry property due to the NP-completeness of \textsc{Exchange Stable Roommates}.}
Note that if each connected component in the seat graph is of order at most 1, it consists of only isolated vertices, and hence {\sf STA} is trivially solvable.
\revised{For {\sf EFA}, we show that it can be solved in polynomial time if the order of each connected component in the seat graph is at most 2 \emph{and} the preferences are symmetric or positive. Moreover, it can be shown that the problem is solvable if the order of each connected component in the seat graph is exactly 2. \rerevised{Recently}, Ceylan, Chen, and Roy \rerevised{have shown} that {\sf EFA} remains NP-hard even if the order of each connected component is at most 2~\cite{SA:Ceylan0R23}, which shows the boundary of the complexity of {\sf EFA}.}
We also show the NP-hardness of {\sf UTA}, {\sf EGA}, and {\sf EFA} on very restricted graph classes.
\revised{The complexity results are summarized in Table~\ref{result:comp}.}

\revised{To investigate the complexity more deeply, we then study the parameterized complexity of \textsc{Seat Arrangement}. As we mentioned above, four \textsc{Seat Arrangement} problems are NP-hard on trees, cycles, or forests, which are bounded treewidth graphs. Thus, we consider designing an algorithm parameterized by a more restrictive parameter: vertex cover number.
For {\sf UTA}, we show that it can be solved in time $n^{O(\gamma)}$ where $\gamma$ is the vertex cover number of the seat graph, which means that it is solvable in polynomial time on bounded vertex cover number graphs. On the other hand,  it can be shown that {\sf UTA} is  W[1]-hardness for $\gamma$ and
cannot be solved in time $n^{o(n)}$ and $f(\gamma)n^{o(\gamma)}$ under Exponential Time Hypothesis (\ETH\@). For {\sf EGA} and {\sf EFA}, we prove that they are NP-hard even on seat graphs with $\gamma=2$.}

Finally, we study the parameterized complexity of the local search to find a stable arrangement.
We show that determining whether a stable arrangement can be obtained from a given arrangement by $k$ swaps is W[1]-hard when parameterized by $k+\gamma$, whereas it can be solved in time $n^{O(k)}$.
\revised{Table~\ref{result:param} shows the parameterized complexity of \textsc{Seat Arrangement}.}

\begin{table}[t]
\centering
\caption{The computational complexity of \textsc{Seat Arrangement}. {\sf mcc} stands for the maximum order of connected components in the seat graph. ${\sf ecc}=x$ means that the order of each connected component is exactly $x$.}
\begin{tabular}{l|c|c|}
\cline{2-3}
& Algorithms  & Hardness  \\ \hline
\multicolumn{1}{|c|}{{\sf UTA}} & \begin{tabular}[c]{@{}c@{}}P\\ ($\occ \le 2$)\end{tabular}        & \begin{tabular}[c]{@{}c@{}}NP-hard\\ (${\sf ecc} =  3$, trees, cycles, cluster graphs)\end{tabular} \\ \hline

\multicolumn{1}{|c|}{{\sf EGA}} & \begin{tabular}[c]{@{}c@{}}P\\ ($\occ\le 2$)\end{tabular}        & \begin{tabular}[c]{@{}c@{}}NP-hard\\ (${\sf ecc} =  3$, cycles, cluster graphs)\end{tabular} \\ \hline

\multicolumn{1}{|c|}{{\sf EFA}} & \begin{tabular}[c]{@{}c@{}}P\\ ($\occ\le 2 \land (\text{positive}\lor \text{symm.})$, ${\sf ecc} = 2$)\end{tabular} & \begin{tabular}[c]{@{}c@{}}NP-hard\\ ($\occ \le  2$, trees, cluster graphs)\end{tabular} \\ \hline

\multicolumn{1}{|c|}{{\sf STA}} &                                                                & \begin{tabular}[c]{@{}c@{}}NP-hard\\ (${\sf ecc} = 2$)\end{tabular} \\ \hline
\end{tabular}
\label{result:comp}
\end{table}

\begin{table}[H]
\centering
\caption{The parameterized complexity of \textsc{Seat Arrangement} where $\gamma$     is the vertex cover number of the seat graph.}
\begin{tabular}{l|c|c|}
\cline{2-3}
& Algorithms      & Hardness                                                          \\ \hline
\multicolumn{1}{|l|}{{\sf UTA}   }                                                      & $n^{O(\gamma)}$ & W[1]-hard for $\gamma$                                        \\ \hline
\multicolumn{1}{|l|}{{\sf EGA} }                                                      &         & NP-hard for $\gamma=2$                                            \\ \hline
\multicolumn{1}{|l|}{{\sf EFA}   }                                                    &         & NP-hard for  $\gamma=2$ \\ \hline
\multicolumn{1}{|l|}{Symmetric {\sf STA}}                  & $n^{O(\gamma)}$  &  PLS-complete~\cite{Massand2019}                                                                 \\ \hline
\multicolumn{1}{|l|}{\textsc{\problemname}}  & $n^{O(k)}$      & W[1]-hard for $k+\gamma$                                      \\ \hline
\end{tabular}
\label{result:param}
\end{table}

\subsection{Related work}\label{subsec:motivation}
A {\em hedonic game} is a non-transferable utility game regarding coalition forming, where each agent's utility depends on the identity of the other agents in the same coalition~\cite{Dreze1980,Bogomolnaia2002}.
It includes the \textsc{Stable Matching} problem~\cite{GS1962,GI1989,Manlove2013,Handbook2016}. 
\textsc{Seat Arrangement} can be considered a hedonic game of arrangement on a graph.

Several graph-based variants of hedonic games have been proposed in the literature, see e.g.~\cite{Aziz2013,Branzei2009,Branzei2011,GairingS19,Igarashi2016,Okubo2019}. 
However, they typically utilize graphs to define the preferences of agents, and both the preferences and coalitions define the utilities of agents. 
On the other hand, in \textsc{Seat Arrangement}, the preferences are defined independently of a graph and the utility of an agent is determined by an arrangement in a graph (more precisely, the preferences for the assigned neighbors in the graph).

A major direction of research about hedonic games is the computational complexity of the problems to find desirable solutions such as a solution with maximum social welfare and a stable solution~\cite{Bogomolnaia2002,Aziz2013}.
\revised{Peters \cite{Peters2016} and Hanaka et al.\cite{Hanaka2019,HM2022,HanakaIO25}  investigate the parameterized complexity of various types of hedonic games for several graph parameters (e.g., treewidth).}
For the local search complexity, Gairing and Savani study the PLS-completeness of finding a stable solution~\cite{GairingS19}.
In terms of mechanism design and algorithmic game theory, many researchers study the price of anarchy, the price of stability, and the price of fairness~\cite{Noam2007,Bertsimas2011,Handbook2016,Bilo2018}.

Possibly the closest relative of \textsc{Seat Arrangement} among hedonic games is \textsc{Stable Matching}. In this problem, agents are partitioned into pairs under preferences~\cite{GS1962,Noam2007,Handbook2016}. 
One of the most famous \textsc{Stable Matching} variants is \textsc{Stable Marriage}.
The Gale-Shapley algorithm solves \textsc{Stable Marriage} instances in the standard setting in polynomial time. So far, many variants of \textsc{Stable Matching} have been considered in the literature, including \textsc{Stable Marriage} 
 and \textsc{Stable Roommate}~\cite{GS1962,Irving1994,IRVING1985,Irving2002,Handbook2016}.
 
One might think that \textsc{Stable Roommate} could be modeled as \textsc{Seat Arrangement} on a graph in which each of the connected components is an edge. 
However, these two models are slightly different due to the difference in the definitions of blocking pairs.
A blocking pair in \textsc{Stable Roommate} can deviate from their partners, and then they match each other, whereas they can only swap their seats in \textsc{Seat Arrangement}.
Alcalde proposes the {\em exchange-stability} \cite{Alcalde1994}.
Under the exchange stability, a blocking pair does not deviate from their partner, but they swap with each other.
Cechl\'arov\'a and Manlove \rerevised{prove} that \textsc{Stable Marriage} and \textsc{Stable Roommates} are NP-complete under exchange stability even if the preference lists are complete and strict~\cite{Cechlarova2002,Cechlarova2005}. From this, it follows that {\sf STA} is NP-complete.
Several other papers about exchange-stable matching models are \cite{Irving2008,Aziz2017,GLW2017,Saffidine2018}.

\textsc{Schelling Games} also has a similar model to ours. \textsc{Schelling Games} 
have received
a lot of attention and are well-studied in the field of Artificial Intelligence and Economics. 
They were originally introduced by Schelling to analyze the dynamics of segregation~\cite{SG:Schelling1969,SG:Schelling1971}.
Agarwal et al. \rerevised{propose} \textsc{Schelling Games} on graphs. The game is as follows. We are given a graph $G=(V,E)$ and  $k$ sets of agents $T_1,\ldots,T_k$ where $\mathbf{P}=\{1,\ldots,n\}=\bigcup_i T_i$ and $T_i\cap T_j=\emptyset$ for $i\neq j$.
The sets $T_1,\ldots,T_k$ indicate the \emph{types} of agents. In a sense, we can assume
that the friends of an agent in $T_i$ belong to $T_i$.
The utility of an agent $p\in T_i$ is defined by an assignment $\pi: \mathbf{P}\to V$ as $U_p(\pi)=\sum_{v\in N_G(\pi(p))} |N_G(\pi(p))\cap (T_i\setminus \{p\})|/|N_G(\pi(p))|$, where $N_G(v)$ is the set of neighbors of $v$ in $G$, that is, the utility is the ratio of friends among the neighbors in $G$.\footnote{If a graph $G$ is regular, that is, the degree of each vertex is the same, \textsc{Schelling Games} can be regarded as \textsc{Seat Arrangement} where the preference graph is the union of unweighted disjoint cliques corresponding to $T_1,\ldots,T_k$.} This types of utility is called \emph{fractional}. Several computational results and the efficiency of the game are studied~\cite{SG:AgarwalEGISV21,SG:ChauhanLM18,SG:BiloBLM22,KreiselBFN24}.

In the context of one-sided markets, Massand and Simon considered the problem of allocating indivisible objects to a set of rational agents where each agent’s final utility depends on the intrinsic valuation of the allocated item as well as the allocation within the agent’s local neighborhood \cite{Massand2019}.
Although the problem is motivated from different contexts, it has a quite similar nature to \textsc{Seat Arrangement}, and they also consider stable and envy-free allocation on the problem.
In fact, the following results about \textsc{Seat Arrangement} are immediately obtained from \cite{Massand2019}: \emph{(1) The $\pos$ of  \textsc{Seat Arrangement} is 1. (2) There is an instance of binary \textsc{Seat Arrangement} with no stable arrangement. (3) The local search problem to find a stable solution under symmetric preferences by swapping two agents iteratively is PLS-complete. (4) {\sf EFA} is NP-complete.}
In this paper, we give further and deeper analyses of \textsc{Seat Arrangement}.

After the appearance of the conference version of this paper, several additional 
papers that study \textsc{Seat Arrangement} appeared~\cite{SA:BerriaudCW23,SA:Ceylan0R23,SA:ChenCJS21,SA:Wilczynski23}.

\section{Preliminaries}
We use standard terminologies in graph theory.
Let $G=(V,E)$ be a graph where $n=|V|$ and $m=|E|$.
For a directed graph $G$, we denote the set of in-neighbors (resp., out-neighbors) of $v$ by $N^{\rm in}_G(v)$ (resp., $N^{\rm out}_G(v)$) and the {\em in-degree} (resp., {\em out-degree}) of $v$  by $d^{\rm in}_G(v):=|N^{\rm in}(v)|$ (resp., $d^{\rm out}_G(v):=|N^{\rm out}(v)|$).
For an undirected graph $G$, we denote the set of neighbors of $v$ by $N_G(v)$ and the \emph{degree} of $v$ by $d_G(v)=|N_G(v)|$.
We also define $\Delta(G)=\max_{v\in V}d_G(v)$ and $\tilde{d}(G)=2m/n$ as the maximum degree and the average degree of $G$, respectively.
A {\em vertex cover} $X$ is the set of vertices such that for every edge, at least one endpoint is in $X$. The {\em vertex cover number} of $G$, denoted by $\vc(G)$, is the size of a minimum vertex cover in $G$.
For simplicity, we sometimes drop the subscript of $G$ if it is clear.
For the basic definition of parameterized complexity such as the classes FPT, XP, and W[1], we refer the reader to the book \cite{Cygan2015}.

\subsection{Models}
We denote by $\agents$ the set of agents, and define an {\em arrangement} as follows.
\begin{definition}[Arrangement]\label{def:seat} 
For a set of \revised{$n$} agents $\agents$ and an undirected graph \revised{$G=(V,E)$}, a bijection $\pi: \agents \rightarrow V(G)$ is called an {\em arrangement} in $G$.
\end{definition}
We denote by $\Pi$ the set of all arrangements in $G$. Note that $|\Pi|=n!$. 
We call graph $G$ the {\em seat graph}. 
The definition means that an arrangement assigns each agent to a vertex in $G$.
When we specify that the seat graph $G$ is in some graph class ${\mathcal G}$, we sometimes use the term \textsc{Seat Arrangement} on ${\mathcal G}$.
Moreover, we define the $(p,q)$-swap arrangement for $\pi$.
\begin{definition}[$(p,q)$-swap arrangement]\label{def:pq-seat} 
Let $\agents$ be a set of agents, $G$ be a graph and $\pi$ be an arrangement. For
a pair of agents $p, q \in \agents$, we say that $\pi'$ is the \emph{$(p, q)$-swap arrangement} if $\pi'$ can obtained from $\pi$ by swapping the arrangement of $p$ and $q$, that is, $\pi'$ satisfies that $\pi'(p)=\pi(q)$, $\pi'(q)=\pi(p)$, and $\pi'(r)=\pi(r)$ for every $r\in \agents \setminus \{p,q\}$.
\end{definition}

Next, we define the {\em preference} of an agent.
\begin{definition}[Preference]\label{def:preference}
 The {\em  preference} of $p\in \agents$ is defined by $f_p: \agents \setminus \{p\}\rightarrow \mathbb{R}$. 
\end{definition}
We denote by $\preferencefamily_\agents$ the set of preferences of all agents in $\agents$.
\revised{Here, we say the preferences are {\em binary} if $f_p: \agents \setminus \{p\}\rightarrow \{0,1\}$ for every agent $p$~\cite{SA:Ceylan0R23,Massand2019,OlsenBT12}, are {\em nonnegative} if $f_p: \agents \setminus \{p\}\rightarrow \mathbb{R}^+_{0}$~\cite{Handbook2016,SA:Ceylan0R23,Massand2019}, and are {\em positive} if $f_p: \agents \setminus \{p\}\rightarrow \mathbb{R}^+$~\cite{SA:Ceylan0R23,OlsenBT12}.
Furthermore, we say they are {\em symmetric} if $f_p(q)=f_q(p)$ holds for any pair of agents $p,q\in \agents$~\cite{Aziz2017,SA:Ceylan0R23,GairingS19,OlsenBT12} and {\em strict} if for any $p\in \agents$ there is no pair of distinct $q,r\in \agents$ such that $f_p(q)=f_p(r)$~\cite{Handbook2016,Cechlarova2005,SA:Ceylan0R23,SA:Wilczynski23}.}
The directed and weighted graph $G_{\preferencefamily_{\agents}}=(\agents, E_{\preferencefamily_{\agents}})$  associated with the preferences $\preferencefamily_{\agents}$ is called the {\em preference graph}, where $E_{\preferencefamily_{\agents}}=\{(p,q)\mid f_p(q)\neq 0\}$ and the weight of $(p,q)$ is $f_p(q)$. If the preferences are symmetric, we define the corresponding preference graph as an undirected graph.

Finally, we define the {\em utility} of an agent and the {\em social welfare} of an arrangement $\pi$.
\begin{definition}[Utility and social welfare]\label{def:utility}
 Given an arrangement $\pi$ and the preference of $p$,  the {\em utility} of $p$ is defined by $U_p(\pi)=\sum_{v\in N(\pi(p))} f_p(\pi^{-1}(v))$.
Moreover, the {\em social welfare} of $\pi$ for $\agents$ is defined by the sum of all utilities of agents and denoted by $\sw(\pi) = \sum_{p\in \agents} U_p(\pi)$.
\end{definition}
The function $U_p(\pi)=\sum_{v\in N(\pi(p))} f_p(\pi^{-1}(v))$ represents the sum over all neighbors of the preferences that the agent has towards the agent assigned to the neighbor. 
This function is often used in coalition formation games~\cite{Branzei2009,GairingS19,Aziz2013}.
By definition, if the seat graph is a complete graph, all the arrangements have the same social welfare. 


In the following, we define \revised{five} types of \textsc{Seat Arrangement} problems.
First, we define \textsc{Utilitarian Arrangement}. An arrangement  $\pi^*$ is {\em maximum} if it satisfies $\sw(\pi^*)\ge \sw(\pi)$ for any arrangement $\pi$.
Then,  \textsc{Utilitarian Arrangement} ({\sf UTA}) is defined as follows.
\begin{description}
\item[Input:] A graph $G = (V,E)$, a set of agents $\agents$, and the preferences of agents $\preferencefamily_\agents$.
\item[Task:] Find a maximum arrangement in $G$.
\end{description}

An arrangement $\pi^*$ is called a {\em maximin} arrangement if $\pi^*$ satisfies  $\min_{p\in \agents} U_p(\pi^*)\ge \min_{p\in \agents}  U_p(\pi)$ for any arrangement $\pi$.
Then, \textsc{Egalitarian Arrangement} ({\sf EGA}) is defined as follows.
\begin{description}
\item[Input:] A graph $G = (V,E)$, a set of agents $\agents$, and the preferences of agents $\preferencefamily_\agents$.
\item[Task:] Find a maximin arrangement in $G$.
\end{description}

\revised{We then} define the \emph{stability} of \textsc{Seat Arrangement}.
\begin{definition}[\textsc{Stability}]
Given an arrangement $\pi$, a pair of agents $p$ and $q$ is called a {\em blocking pair} for $\pi$ if it satisfies that $U_p(\pi')> U_p(\pi)$ and $U_q(\pi')> U_q(\pi)$ where $\pi'$ is the $(p,q)$-swap arrangement for $\pi$.
If there is no blocking pair in an arrangement, it is said to be {\em stable}.
\end{definition}
Then, the \textsc{Stable Arrangement} ({\sf STA}) problem is as follows.
\begin{description}
\item[Input:] A graph $G = (V,E)$, a set of agents $\agents$, and the preferences of agents $\preferencefamily_\agents$.
\item[Task:] Decide whether there \rerevised{exists} a stable arrangement in $G$.
\end{description}



Finally, we define the {\em envy-freeness} of \textsc{Seat Arrangement}.
\begin{definition}[\textsc{Envy-free}]
An arrangement $\pi$ is envy-free if there is no agent $p$ such that there exists $q\in \agents\setminus \{p\}$ that satisfies  $U_p(\pi')> U_p(\pi)$ where $\pi'$ is the  $(p,q)$-swap arrangement for $\pi$.
\end{definition}
By the definition of envy-freeness, we have:
\begin{proposition}
If an arrangement is envy-free, it is also stable.
\end{proposition}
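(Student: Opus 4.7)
The plan is to prove the contrapositive: assume $\pi$ is not stable and derive that $\pi$ is not envy-free. This is essentially an unpacking of the two definitions, since envy-freeness demands that no single agent $p$ benefits from any swap, whereas stability only forbids pairs in which \emph{both} parties strictly benefit. Hence envy-freeness is the logically stronger condition, and the implication should follow with no additional machinery.

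Concretely, I would proceed as follows. Suppose $\pi$ is not stable. Then by the definition of stability there exists a blocking pair $p, q \in \agents$ for $\pi$, i.e., letting $\pi'$ be the $(p,q)$-swap arrangement for $\pi$, we have
\[
U_p(\pi') > U_p(\pi) \quad \text{and} \quad U_q(\pi') > U_q(\pi).
\]
Now focus on agent $p$ alone. The arrangement $\pi'$ is by construction the $(p,q)$-swap arrangement for $\pi$, and the first inequality above says $U_p(\pi') > U_p(\pi)$. Thus $p$ witnesses (via $q$) the negation of the envy-free condition in Definition~\ref{def:preference}~/~the envy-free definition, so $\pi$ is not envy-free. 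Contraposing yields the claim.

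There is essentially no obstacle here: the statement is a direct consequence of the quantifier structure of the two definitions (``both strictly improve'' is a stronger requirement than ``at least one strictly improves''). The only thing to be careful about is to state the swap arrangement $\pi'$ explicitly so that the same $\pi'$ serves as a witness for both the blocking-pair property and the envy condition; this avoids any ambiguity about whose perspective the swap is defined from, which is well-defined since the $(p,q)$-swap and the $(q,p)$-swap produce the same arrangement.
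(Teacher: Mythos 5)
Your argument is correct and is exactly the intended one: the paper states this proposition as an immediate consequence of the definitions, and your contrapositive unpacking (a blocking pair in particular gives one agent who strictly benefits from the swap, hence an envious agent) is the same reasoning made explicit. The only nit is the stray reference to the definition of preferences where you mean the definition of envy-freeness, but this does not affect the proof.
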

Then, \textsc{Envy-free Arrangement} ({\sf EFA)} is defined as follows.
\begin{description}
\item[Input:] A graph $G = (V,E)$, a set of agents $\agents$, and the preferences of agents $\preferencefamily_\agents$.
\item[Task:] Decide whether there \rerevised{exists} an envy-free arrangement in $G$.
\end{description}

In this paper, we assume that the number of agents equals the number of vertices in the seat graph, that is,  $|\agents|=|V(G)|$. However, it is natural to consider a more general model that admits vacant seats. In this case, we can reduce it to the case of $|\agents|=|V(G)|$ by adding $|V(G)|-|\agents|$ \emph{neutral} agents, which stand for vacant seats. Here, a neutral agent $p$ satisfies $f_p(q) = f_q(p) = 0$ for any $q\in \agents\setminus \{p\}$. 
This reduction naturally expands the model except for the stability. In the current definition of stability, even if there exists an agent who wants to move to a vacant seat, he/she cannot move because the neutral agent standing for the vacant seat does not have the incentive to move. Thus, we also consider another stability concept, called \emph{strict stability}.

\begin{definition}[\textsc{Strict stability}]
Given an arrangement $\pi$, a pair of agents $p$ and $q$ is called a {\em weakly blocking pair} for $\pi$ if it satisfies that $U_p(\pi')> U_p(\pi)$ and \rerevised{$U_q(\pi')\ge U_q(\pi)$} where $\pi'$ is the $(p,q)$-swap arrangement for $\pi$.
If there is no blocking pair in an arrangement, it is said to be {\em strictly stable}.
\end{definition}
Then, the \textsc{Strictly Stable Arrangement} ({\sf StrictSTA}) problem is as follows.
\begin{description}
\item[Input:] A graph $G = (V,E)$, a set of agents $\agents$, and the preferences of agents $\preferencefamily_\agents$.
\item[Task:] Decide whether there \rerevised{exists} a strictly stable arrangement in $G$.
\end{description}
It is easily seen that {\sf StrictSTA} is PLS-complete by the same reduction of  {\sf StrictSTA}~\cite{Massand2019}.

\subsection{Tools}
In this subsection, we introduce a polynomial-time algorithm for computing a \emph{maximum weight $k$-matching}, which is a maximum weight matching of size exactly $k$.

\begin{theorem}[Folklore]\label{thm:k-matching}
Let $G=(V,E)$ be a complete graph with arbitrary edge weights. Given an integer $k<\lfloor n/2\rfloor$, a maximum weight $k$-matching of $G$ can be computed in polynomial time.
\end{theorem}
\begin{proof}
    We add $n-2k$ universal vertices adjacent to all the vertices in $V$ with edges of weight $0$.
    We denote the constructed graph by $G'$. Then the number of vertices in $G'$ is even, which implies that $G'$ has a perfect matching. For any perfect matching $M'$ of maximum weight in $G'$, $M'\cap E$ forms a maximum weight $k$-matching in $G$. Since a maximum weight perfect matching with arbitrary edge weights can be computed in polynomial time~\cite{edmonds1965maximum}, one can compute a maximum weight $k$-matching of $G$ in polynomial time.
\end{proof}

\color{black}
\section{Stability and Fairness}\label{sec:stability}
In this section, we study the stability and the fairness of \textsc{Seat Arrangement}.
Let $\Pi_s$ be the set of stable solutions and $\pi^*$ be a maximum arrangement.
Then, the price of stability (\pos) is defined by $\min_{\pi_s \in \Pi_s}  \sw(\pi^*)/\sw(\pi_s)$~\cite{Noam2007}. 
In other words, the price of stability is defined as the gap between the maximum social welfare and the social welfare of a maximum stable solution.
From \cite{Massand2019}, we immediately obtain the following proposition.
 \begin{proposition}[\cite{Massand2019}]\label{prop:stability:Massand}
In symmetric \textsc{Seat Arrangement}, a maximum arrangement is stable, and thus, the $\pos$ of symmetric \textsc{Seat Arrangement} is 1. 
On the other hand, there is an instance of binary \textsc{Seat Arrangement} with no stable arrangement.
\end{proposition}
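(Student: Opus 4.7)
The two parts of the proposition are essentially independent, so I would address them separately.

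For the symmetric case, the plan is to show that $\sw$ itself serves as a potential function for swap improvements, forcing every maximum arrangement to be stable. Suppose $\pi$ is maximum and that $(p,q)$ is a blocking pair with associated $(p,q)$-swap arrangement $\pi'$. The key identity I would establish is
\[
\sw(\pi') - \sw(\pi) \;=\; 2\,\bigl[(U_p(\pi') - U_p(\pi)) + (U_q(\pi') - U_q(\pi))\bigr].
\]
To prove it, I would expand $\sw(\pi') - \sw(\pi) = \sum_r (U_r(\pi') - U_r(\pi))$ and note that for $r \notin \{p,q\}$ the only sources of change are edges of $G$ from $\pi(r)$ to $\pi(p)$ or $\pi(q)$, giving $U_r(\pi') - U_r(\pi) = [\pi(p) \in N(\pi(r))]\,(f_r(q) - f_r(p)) + [\pi(q) \in N(\pi(r))]\,(f_r(p) - f_r(q))$. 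Applying symmetry ($f_r(q) = f_q(r)$ and $f_r(p) = f_p(r)$), the sum over $r \neq p,q$ equals exactly $(U_p(\pi') - U_p(\pi)) + (U_q(\pi') - U_q(\pi))$, which combined with the direct contributions from $p$ and $q$ yields the identity. Since $(p,q)$ is blocking, both bracketed terms are strictly positive, so $\sw(\pi') > \sw(\pi)$, contradicting maximality of $\pi$. Hence every maximum arrangement is stable, which in particular shows that a maximum stable arrangement attains the global maximum, so $\pos = 1$.

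For the binary case, the plan is to exhibit an explicit small instance in which every arrangement admits a blocking pair, and to verify this by a finite case analysis. I would keep the seat graph simple --- for example a disjoint union of a few edges, a short path, or a short cycle --- and design binary preferences that are cyclically asymmetric so that no arrangement can place all mutually-liking pairs at adjacent seats simultaneously. Once such a candidate instance is fixed, the number of arrangements is bounded by $n!$ and, after quotienting by the seat graph's automorphisms, reduces to a handful of cases; for each I would enumerate the $\binom{n}{2}$ candidate swaps and exhibit one that strictly improves both agents' utilities. The main obstacle is the search itself: binary preferences leave little room to tune, and plausible-looking cyclic profiles tend to accidentally stabilize at least one arrangement, so finding the right small preference profile is the delicate step. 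Once the instance is fixed, verification is purely mechanical.
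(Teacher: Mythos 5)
Your argument for the symmetric half is correct and is exactly the ``potential function argument'' the paper invokes (the paper itself only gestures at it and cites \cite{Massand2019}): the identity $\sw(\pi')-\sw(\pi)=2\bigl[(U_p(\pi')-U_p(\pi))+(U_q(\pi')-U_q(\pi))\bigr]$ holds as you state it. I checked the one delicate point, namely that when $\pi(p)$ and $\pi(q)$ are adjacent in $G$ the terms $f_p(q)$ and $f_q(p)$ cancel on both sides, so the identity is valid in all cases, and the conclusion that a maximum arrangement admits no blocking pair (hence $\pos=1$) follows.

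The second half, however, is not a proof but a declaration of intent: the proposition asserts the \emph{existence} of a binary instance with no stable arrangement, and you never exhibit one. You correctly sense that this is the delicate part --- indeed, the natural candidates you name all fail. A cyclic binary preference profile on three agents placed on $P_3$, on four agents placed on $C_4$, or on four agents placed on two disjoint edges each turns out to admit a stable arrangement (on two disjoint edges one can check by exhausting the three pairings that the blocking conditions for all three cannot be satisfied simultaneously with values in $\{0,1\}$). So the instance cannot be conjured by the generic recipe ``small graph plus cyclic likes''; it has to be constructed with some care, and until it is written down the claim is unproved. The paper sidesteps this entirely by attributing the example to Massand and Simon \cite{Massand2019}; to make your proof self-contained you would need either to import their explicit construction or to produce and verify one of your own. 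As it stands, the first half of your proposal is complete and the second half has a genuine gap.
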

Proposition \ref{prop:stability:Massand} is easily shown by using the potential function $\Phi(\pi)=\sw(\pi)$.
On the other hand, there is an instance with no envy-free arrangement even if the preferences are symmetric. Consider three agents $x,y,z$ such that $f_p(q)=f_q(p)=1$ for $p,q\in \{x,y,z\}$. If we assign them to a path $P_3$, two agents assigned to endpoints of $P_3$ envy the agent assigned to the center in $P_3$.
\begin{proposition}\label{prop:no_envy_free}
There is an instance of binary and symmetric \textsc{Seat Arrangement} with no envy-free arrangement.
\end{proposition}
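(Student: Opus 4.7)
The plan is to formalize the example already sketched in the paragraph preceding the statement. Take the instance with three agents $\agents=\{x,y,z\}$, binary symmetric preferences $f_p(q)=1$ for every ordered pair $p\neq q$, and seat graph $G=P_3$, the path on three vertices with center $c$ and endpoints $\ell,r$. I would show that no arrangement on this instance is envy-free.

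First I would compute utilities. Under any arrangement $\pi$, the agent assigned to $c$ has two neighbors, and because every preference value is $1$, its utility equals $2$; each of the two agents assigned to an endpoint has a single neighbor (namely whoever sits at $c$) and hence utility $1$. This holds for every bijection $\pi:\agents\to V(P_3)$, since the structural property depends only on the degree sequence of $P_3$, not on the labels.

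Next I would exhibit an envy. Fix any arrangement $\pi$ and let $\agent$ be the agent with $\pi(\agent)=\ell$ and $\agentt$ the agent with $\pi(\agentt)=c$. Consider the $(\agent,\agentt)$-swap arrangement $\pi'$, so that $\pi'(\agent)=c$ and $\pi'(\agentt)=\ell$. By the computation above, $U_\agent(\pi')=2 > 1 = U_\agent(\pi)$. Hence $\agent$ envies $\agentt$ in $\pi$, witnessed by the swap, so $\pi$ is not envy-free. Since $\pi$ was arbitrary, this instance admits no envy-free arrangement, proving the proposition. I do not foresee a hard step here; the only thing to be careful about is to invoke the definition of envy-freeness in the precise form given (existence of some $\agentt$ whose swap strictly increases $\agent$'s utility), rather than the stronger mutual-improvement condition used for stability.
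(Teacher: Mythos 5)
Your proof is correct and uses exactly the same instance and argument as the paper: three mutually-liking agents on $P_3$, where an endpoint agent's utility rises from $1$ to $2$ by swapping with the center agent, so no arrangement is envy-free. The only difference is that you spell out the utility computation that the paper leaves implicit.
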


Next, we consider the price of fairness~\cite{Bertsimas2011,Caragiannis2012,Handbook2016}.
 Let $\Pi_f$ be the set of maximin solutions.
Then, the price of fairness is defined by $\min_{\pi_f \in \Pi_f}  \sw(\pi^*)/\sw(\pi_{f})$, that is, the ratio between the maximum social welfare and the social welfare of a maximin arrangement.

 \begin{proposition}
There is an instance such that the $\pof$ of \textsc{Seat Arrangement} is unbounded.
\end{proposition}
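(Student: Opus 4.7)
The plan is to exhibit a family of instances parametrized by an integer $n\ge 3$ and a large positive number $N$ on which the ratio defining $\pof$ grows without bound. Take the seat graph to be the star $K_{1,n-1}$ with center $v_{0}$ and leaves $v_{1},\dots,v_{n-1}$, and introduce $n$ agents $a_{0},a_{1},\dots,a_{n-1}$ whose preferences are asymmetric: put $f_{a_{0}}(a_{i})=N$ and $f_{a_{i}}(a_{0})=0$ for every $i\ge 1$, and $f_{a_{i}}(a_{j})=1$ for all distinct $i,j\ge 1$. Intuitively, $a_{0}$ is an ``attracted'' agent that benefits enormously from sitting at the hub, but no other agent reciprocates its preference.

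I would analyse every arrangement by who occupies $v_{0}$. If $\pi(a_{0})=v_{0}$, then $U_{a_{0}}=(n-1)N$, whereas each other agent is a leaf whose only neighbour is $a_{0}$ and therefore has utility $f_{a_{i}}(a_{0})=0$; the social welfare is $(n-1)N$ and the minimum utility is $0$. If instead $\pi(a_{j})=v_{0}$ for some $j\ge 1$, then
\[
U_{a_{j}} \;=\; f_{a_{j}}(a_{0}) + \sum_{i\ge 1,\,i\ne j} f_{a_{j}}(a_{i}) \;=\; n-2,
\]
the now-leaf agent $a_{0}$ has $U_{a_{0}}=N$, and every other leaf $a_{i}$ receives $U_{a_{i}}=f_{a_{i}}(a_{j})=1$; so this arrangement has social welfare $N+2(n-2)$ and minimum utility $1$.

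Combining the two cases, the maximum minimum utility is exactly $1$, and it is attained by precisely those arrangements in which a non-$a_{0}$ agent sits at $v_{0}$; by the symmetry of the star, all such arrangements share the common social welfare $N+2(n-2)$. The inequality $(n-1)N > N+2(n-2)$ holds whenever $N>2$ and $n\ge 3$, which certifies $\sw(\pi^{*})=(n-1)N$. Substituting into the definition of $\pof$ gives
\[
\pof \;=\; \frac{(n-1)N}{N+2(n-2)},
\]
which for each fixed $n$ tends to $n-1$ as $N\to\infty$; hence the ratio is unbounded as $n$ grows. The only substantive step is the pair of utility computations; the design insight is to use asymmetric preferences to decouple ``who generates welfare at a vertex'' from ``who receives utility from being adjacent to it'', so that placing the welfare-generating agent at the hub is optimal for social welfare but catastrophic for the minimum utility, thereby forcing every maximin arrangement to waste the welfare potential of the star.
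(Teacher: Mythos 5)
Your construction is correct: the two utility computations check out (with $a_0$ at the hub the welfare is $(n-1)N$ but some leaf gets $0$; with any $a_j$, $j\ge 1$, at the hub the minimum utility is $1$ and the welfare is $N+2(n-2)$), so the maximin arrangements are exactly the latter and $\pof=\frac{(n-1)N}{N+2(n-2)}$, which is unbounded over the family. The route is genuinely different from the paper's. The paper uses a fixed seat graph of two disjoint edges and four agents with asymmetric preferences taking values $x$, $y$, and $0$: the welfare-maximizing pairing has welfare $2x$ but leaves some agent with utility $0$, while the unique maximin pairing has welfare $4y$, giving $\pof=x/(2y)$, which already diverges with $y=1$ and $x\to\infty$ on a four-vertex graph. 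Your star construction instead keeps the unboundedness tied to the graph growing: for fixed $n$ the ratio saturates at $n-1$, so you must let $n\to\infty$ (as you note). What the paper's version buys is that the phenomenon occurs even when every component of the seat graph has order $2$ and the number of agents is constant; what yours buys is a cleaner structural story (a single ``attractor'' agent whose welfare contribution is wasted by every fair arrangement) and the observation that on connected seat graphs the gap can be as large as roughly the maximum degree, which resonates with the paper's later average-degree bounds. Both arguments are complete; one presentational nit is that you should state explicitly how the two parameters are coupled (e.g.\ $N=n$) when exhibiting the unbounded family.
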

\begin{proof}
Let  $x\ge y \ge 1$ be two integers and the seat graph $G$ be a graph consisting of two edges.
Finally, we set the preferences of four agents $p_1,p_2,p_3,p_4$ as follows:
$f_{p_1}(p_3)=f_{p_2}(p_4)=f_{p_3}(p_2)=f_{p_4}(p_1)=x$, $f_{p_1}(p_2)=f_{p_2}(p_1)=f_{p_3}(p_4)=f_{p_4}(p_3)=y$, and $f_{p_1}(p_4)=f_{p_2}(p_3)=f_{p_3}(p_1)=f_{p_4}(p_2)=0$.

Since the seat graph consists of two edges, the instance has only three arrangements by symmetry.
If $p_1, p_2$ (resp., $p_3, p_4$) are assigned to the same edge, the social welfare is $4y$ and each agent has the utility $y$.
On the other hand, if $p_1, p_3$ (resp., $p_2,p_4$) or $p_1, p_4$ (resp., $p_2,p_3$) are assigned to the same edge, the social welfare is $2x$ and the least utility is $0$.
If $x$ is an arbitrarily large integer and $y=1$, the social welfare of a maximum arrangement is $2x$.
Then, we have $\pof=2x/4y=x/2y$, and hence $\pof$ is unbounded.
\end{proof}

For the binary case, the $\pof$ is bounded by the average degree $\tilde{d}(G)$ of the seat graph $G$. If the least utility is $0$ for every arrangement, we can choose an arrangement with maximum social welfare. In this case, $\pof$ is $1$. Otherwise, the least utility is $1$ and the social welfare of such an arrangement is at least $n$. Since the social welfare is at most $2m$, $\pof$ is bounded by $2m/n=\tilde{d}(G)$.
 \begin{proposition}
For any instance, the $\pof$ of binary \textsc{Seat Arrangement} is at most $\tilde{d}(G)$.
\end{proposition}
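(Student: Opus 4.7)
The plan is to prove the bound by a short two-case analysis on the maximin value, after first establishing the universal inequality $\sw(\pi) \le 2m$ that holds for every arrangement $\pi$ in the binary setting. This bound comes directly from the definition of $\sw$: for each edge $\{u,v\} \in E$ the pair of agents placed on its endpoints contributes $f_{\pi^{-1}(u)}(\pi^{-1}(v)) + f_{\pi^{-1}(v)}(\pi^{-1}(u)) \le 1+1 = 2$, and summing over the $m$ edges gives $\sw(\pi) \le 2m$. In particular $\sw(\pi^*) \le 2m$.

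Next I would split on whether $\mu := \min_{p \in \agents} U_p(\pi_f)$ equals $0$ or is positive, where $\pi_f$ is any maximin arrangement. If $\mu = 0$, then since preferences are nonnegative every arrangement has minimum utility at least $0$, so the maximin value is forced to be $0$ and \emph{every} arrangement -- in particular the optimum $\pi^*$ -- belongs to $\Pi_f$. Taking $\pi_f = \pi^*$ in the definition of $\pof$ gives ratio $1 \le \tilde{d}(G)$. If on the other hand $\mu \ge 1$ (which is the only other option, since binary utilities are integer-valued), then each of the $n$ agents contributes at least $1$ to $\sw(\pi_f)$, yielding $\sw(\pi_f) \ge n$. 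Combining with the universal bound gives
\[
\pof \;\le\; \frac{\sw(\pi^*)}{\sw(\pi_f)} \;\le\; \frac{2m}{n} \;=\; \tilde{d}(G).
\]

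There is essentially no obstacle: the whole argument reduces to the two observations $\sw(\pi^*) \le 2m$ and $\sw(\pi_f) \ge n$ (in the nontrivial case). The only subtle point worth flagging is the quantifier in the definition $\pof = \min_{\pi_f \in \Pi_f} \sw(\pi^*)/\sw(\pi_f)$: in the case $\mu = 0$ one must check that $\pi^*$ itself qualifies as a maximin arrangement, but this is immediate because nonnegativity of preferences forces the maximin value to be $0$ and hence makes $\Pi_f$ the set of all arrangements.
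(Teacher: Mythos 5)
Your proof is correct and follows essentially the same route as the paper's: the same case split on whether the maximin value is $0$ or at least $1$, combined with the bounds $\sw(\pi^*)\le 2m$ and $\sw(\pi_f)\ge n$. You are somewhat more careful than the paper about the quantifier in the definition of $\pof$ (checking that $\pi^*$ itself is maximin when the maximin value is $0$), but the argument is the same.
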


\begin{figure}
	\centering
	\includegraphics[height=.16\textheight]{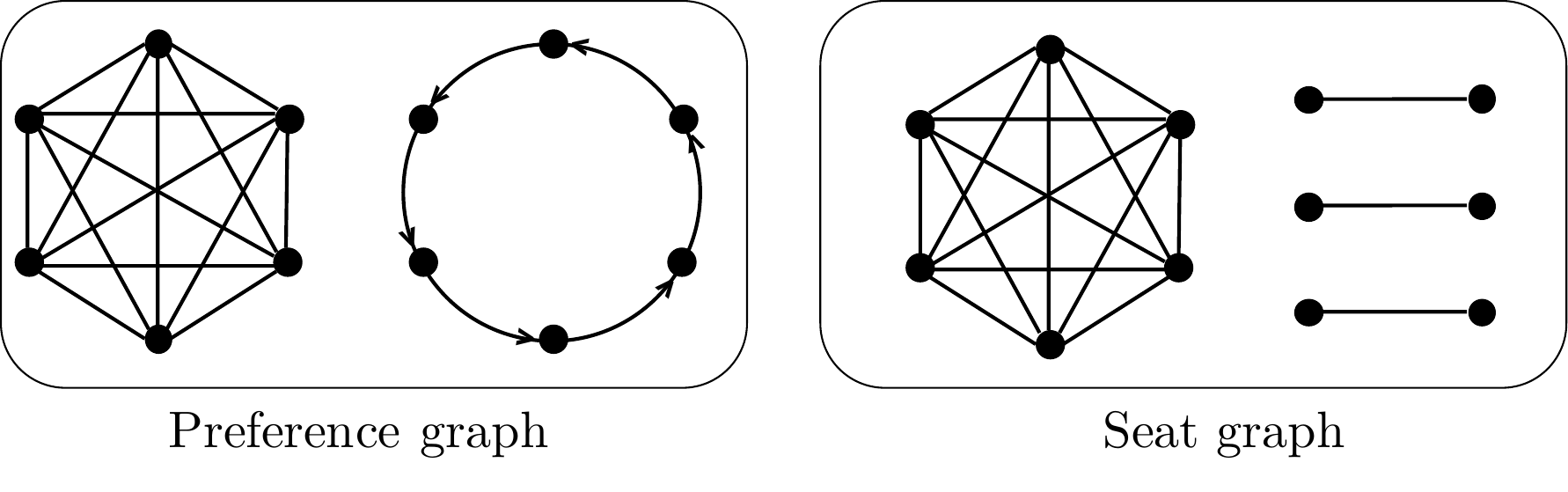}
	\caption{The preference graph (left) and the seat graph (right) in the proof of Proposition~\ref{prop:pof}. }
	\label{fig:PoF}
\end{figure}	
	
Finally, we give almost tight lower bounds of the price of fairness for binary \textsc{Seat Arrangement}.
 \begin{proposition}\label{prop:pof}
There is an instance such that the $\pof$ of binary \textsc{Seat Arrangement} is at least $\tilde{d}(G)-1/4$. Furthermore, there is an instance such that the $\pof$ of binary and symmetric \textsc{Seat Arrangement} is at least $\tilde{d}(G)/2+1/12$.
\end{proposition}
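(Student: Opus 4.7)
The plan is to prove both bounds by exhibiting, in each case, a small explicit instance---the preference graph and seat graph drawn in Figure~\ref{fig:PoF}---and then reading the ratios off by inspection. Because the seat graphs are small and have many automorphisms, the number of distinct arrangements to inspect is small, so both the maximum social welfare and the value of every maximin arrangement can be computed by hand and the bound on $\pof$ follows directly from the definition.

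Concretely, I would first fix the two graphs, and then identify an arrangement $\pi^*$ that realises every edge of the seat graph as a $1$-valued preference pair, so that $\sw(\pi^*) = 2m$ while one agent unavoidably receives utility $0$ under $\pi^*$. I would next enumerate, up to the symmetries of the seat graph, every arrangement whose minimum utility is at least $1$, and show that each such candidate maximin arrangement fails to realise a bounded number of preference edges; call the resulting (common) value $S$. Substituting into $\pof = \sw(\pi^*)/S = 2m/S$ and rewriting via $\tilde{d}(G) = 2m/n$ then yields the asymmetric bound $\tilde{d}(G) - 1/4$.

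For the symmetric bound the same outline applies, but the construction must be chosen more carefully: because each preference edge now contributes to the utility of both endpoints, the gap between $\sw(\pi^*)$ and the best attainable welfare with minimum utility at least $1$ is structurally smaller, leaving only $\tilde{d}(G)/2 + 1/12$ rather than nearly $\tilde{d}(G)$. I would again take a small symmetric seat graph together with a symmetric binary preference graph, verify by a short case analysis over the automorphism orbits of the seat graph that the maximum arrangement captures essentially all preference edges while every maximin arrangement loses a fixed prescribed number, and simplify the resulting ratio.

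The main obstacle will be calibrating the preference graph so that the ``maximin deficit'' is exactly the prescribed $1/4$ (respectively $1/12$). Too dense a preference graph makes the maximum arrangement itself maximin and collapses $\pof$ to $1$; too sparse a preference graph drives the minimum utility to $0$ in every arrangement and renders $\pof$ useless as a lower bound of this precise form. The construction must therefore sit on the boundary where the single agent whose utility is raised from $0$ to $1$ by any balancing swap forces a cascade of local swaps that cost exactly the desired number of preference edges; I expect this to be achieved by concentrating the critical preferences on a high-degree vertex of the seat graph, so that ``rescuing'' the unhappy agent at that vertex is paid for by many neighbours simultaneously.
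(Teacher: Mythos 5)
There is a genuine gap: your proposal never actually exhibits an instance, and the quantitative frame you commit to cannot produce the claimed bound. You assume the witnessing instance is a \emph{small} seat graph whose arrangements can be enumerated by hand, and that every maximin arrangement ``fails to realise a bounded number of preference edges,'' so that $\pof = 2m/S$ with $S = 2m - O(1)$. But then $\pof \to 1$, whereas the point of the proposition (as a near-matching companion to the upper bound $\pof \le \tilde{d}(G)$) is to realise a ratio essentially equal to $\tilde{d}(G)$, which requires a \emph{family} of instances with $\tilde{d}(G)$ unbounded and a maximin arrangement whose welfare is a vanishing fraction of the optimum --- roughly $S \approx n$ against $\sw(\pi^*) \approx 2m$. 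Your closing intuition (one unhappy agent at a high-degree seat whose rescue costs its neighbours a few edges) is likewise too local to create a gap of order $\tilde{d}(G)$.

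The paper's construction makes this concrete. Take $2n$ agents split into $\agents_K$ (mutual binary clique preferences) and $\agents_C$ (a directed, respectively undirected, cycle of binary preferences), and a seat graph consisting of $K_n$ plus $n/2$ disjoint edges (respectively $n/3$ triangles), so $\tilde{d}(G) = \Theta(n)$. The maximum arrangement puts $\agents_K$ on the clique, earning $n(n-1)$, but then some cycle agent on an edge gets utility $0$. To push the minimum utility up to $1$, \emph{all} of $\agents_C$ must occupy the clique (a cycle agent seated on an edge forces its partner's utility to $0$), which evicts all of $\agents_K$ onto the sparse part; the welfare collapses from $\Theta(n^2)$ to $\Theta(n)$, and the ratio works out to $n/2 - 1/4 = \tilde{d}(G) - 1/4$. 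So the mechanism is not a bounded local deficit but a global trade-off in which fairness destroys almost all of the welfare; any correct proof must build an instance with this structure, and your outline as written would not find one.
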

\begin{proof}
We construct such an instance.
Let $\agents_K$ and $\agents_C$ be sets of agents each having $n$ members \revised{where $n$ is even}.
The preference graph consists of an undirected clique and a directed cycle.
The seat graph $G$ consists of a clique of size $n$ and $n/2$ disjoint edges.
The number of edges in $G$ is $n(n-1)/2+n/2=n^2/2$ and the average degree of $G$ is $\tilde{d}(G)=n/2$.
See Figure \ref{fig:PoF}.

In a maximum arrangement on $G$, every agent in $\agents_K$ is assigned to \revised{the clique of size $n$} and every agent in $\agents_C$ is assigned to disjoint edges.
The social welfare is $n(n-1)+n/2=n(n-1/2)$ and at least one agent on an edge has the utility 0.

On the other hand, in a maximin arrangement, every agent in $\agents_C$ is assigned to a clique and every agent in $\agents_K$ is assigned to disjoint edges.
Then the utility of any agent is 1 and the social welfare is $n+n=2n$.
Therefore, the price of fairness is \revised{$n(n-1/2)/2n=n/2-1/4= \tilde{d}(G)-1/4$}. 

For the symmetric case, we modify the preference graph and the seat graph.
For the preference graph, we change a directed cycle in the preference graph to an undirected cycle. For the seat graph, we replace $n/2$ disjoint edges by $n/3$ disjoint triangles $K_3$.
When every agent in $\agents_K$ is assigned to a clique, the social welfare is $n(n-1)+4n/3=n(n+1/3)$ and it is maximum. The least utility of an agent is 1.
On the other hand, when every agent in $\agents_C$ is assigned to a clique, the utility of any agent is 2.
The social welfare is $2n+6n/3=4n$.
Therefore, the price of fairness is \revised{$n(n+1/3)/4n=n/4+1/12= \tilde{d}(G)/2+1/12$}. 
\end{proof}

\section{Computational Complexity}\label{sec:complexity}
In this section, we give the dichotomy of computational complexity of three \textsc{Seat Arrangement} problems in terms of the order of components in the seat graph.

\subsection{Tractable case}
In this subsection, we show that  {\sf UTA}, {\sf EGA}, and symmetric {\sf EFA}  are solvable in polynomial time if each component of the seat graph is of order at most 2. 
\begin{theorem}\label{MU:forest:P}
 {\sf UTA} is solvable in polynomial time if each connected component of the seat graph has at most two vertices.
\end{theorem}
\begin{proof} 
Let $K_n=(\agents,E_{\agents})$ be the weighted and undirected complete graph such that the weight of edge $\{p,q\}\in E_{\agents}$ is defined by $f_p(q)+f_q(p)$.
Also, let  $n'$ be the number of endpoints of edges in the seat graph. Notice that $n'$ is always even.
\revised{Then, we find a maximum weight matching $M_{n'/2}$ of size $n'/2$ in $K_n$ in polynomial time by using Theorem~\ref{thm:k-matching}.}
Next, we assign each pair of the agents in $M_{n'/2}$ to an edge in $G$ and the rest of the agents \rerevised{to} isolated vertices.
Let $\pi^*$ be such an arrangement.

In the following, we show that $\pi^*$ is maximum.
Suppose that  there exists $\pi'$ such that $\sw(\pi')>\sw(\pi^*)$.
Let $V_E$ be the set of endpoints of $E$ and $V_I$ be the set of isolated vertices.
Since the size of each connected component in $G$ is  bounded by 2, for any $\pi$, we have $\sw(\pi)=\sum_{v\in V_E} U_{\pi^{-1}(v)}(\pi) + \sum_{v\in V_I} U_{\pi^{-1}(v)}(\pi)
=\sum_{v\in V_E} U_{\pi^{-1}(v)}(\pi) 
=\sum_{(u,v)\in E} (f_{\pi^{-1}(u)}({\pi^{-1}(v)}) 
+f_{\pi^{-1}(v)}({\pi^{-1}(u)}))$.
Now, the number of endpoints of edges \rerevised{in $E$} is $n'$ \rerevised{and} $|E|=n'/2$.
\rerevised{Then,} for any $\pi$, $M=\{\{\pi^{-1}(u), \pi^{-1}(v)\}\mid \{u,v\}\in E\}$ is a matching of size $n'/2$ with weight $\sw(\pi)=\sum_{(u,v)\in E} (f_{\pi^{-1}(u)}({\pi^{-1}(v)})+f_{\pi^{-1}(v)}({\pi^{-1}(u)})=\sum_{(p,q)\in M} (f_{p}(q)+f_{q}(p))$ in $K_n$.
Thus, if there exists $\pi'$ such that $\sw(\pi')>\sw(\pi^*)$, there exists a heavier matching of size $n'/2$ than $M_{n'/2}$.
This contradicts that $M_{n'/2}$ is a maximum weight matching of size $n'/2$.
\end{proof}

By using a maximin matching algorithm proposed in \cite{Gabow1988} instead of Edmonds's algorithm,  we can solve  {\sf EGA} in polynomial time.
We apply that to the weighted and undirected complete graph such that the weight of edge $\{p,q\}\in E_{\agents}$ is defined by $\min \{ f_p(q), f_q(p)\}$.
\begin{theorem}\label{EGA:forest:P}
 {\sf EGA}  is solvable in polynomial time if each connected component of the seat graph has at most two vertices.
\end{theorem}
\begin{proof}
Let $K_n=(\agents,E_{\agents})$ be the weighted and undirected complete graph such that the weight of edge $\{p,q\}\in E_{\agents}$ is defined by $\min\{f_p(q), f_q(p)\}$.
Also, let  $n'$ be the number of endpoints of edges in the seat graph.
Next, we find a maximin matching $M$ of size $n'$ in $K_n$, which is a matching of size $n'$ such that the minimum weight of edges in $M$  is maximum.
It can be computed in time $O(m\sqrt{n\log n})$~\cite{Gabow1988}.
Then we assign  $n'/2$ pairs of agents in $M$ to endpoints of an edge in $G$ and the rest of the agents to isolated vertices.

In the following, we confirm that such an arrangement, denoted by $\pi$, is a maximin arrangement. 
If the least utility among all the agents on $\pi$ is 0 and there is at least one isolated vertex, it is clearly a maximin arrangement because an agent with the least utility is on an isolated vertex.
\revised{Otherwise, an agent having the least utility among all the agents is assigned to an endpoint of an edge.
Here, we denote the agent having the least utility in $\pi$ by $p_l^{\pi}$ and its utility by $U_{p_l^{\pi}}(\pi)$.}

Suppose that $\pi$ is not a maximin arrangement.
Then there exists an assignment $\pi'$ and an agent $p_l^{\pi'}$ with the least utility in $\pi'$ such that $U_{p_l^{\pi'}}(\pi')> U_{p_l^{\pi}}(\pi)$.
\revised{We also denote an agent $p^{\pi'}_{l_e}$ with the least utility among the agents assigned to an endpoint of an edge in $\pi'$.}
Note that $U_{p_{l_e}^{\pi'}}(\pi')\ge U_{p_l^{\pi'}}(\pi')$ and \revised{it is possible that $p_l^{\pi'}=p^{\pi'}_{l_e}$}. 
Let $q'$ be the partner of $p_{l_e}^{\pi'}$ on the edge. 
Since the seat graph consists of $n'/2$ edges, we have $U_{p_{l_e}^{\pi'}}(\pi')=f_{p_{l_e}^{\pi'}}(q')$.
Here, we define $M'=\{\{\pi'^{-1}(u), \pi'^{-1}(v)\}\mid \{u,v\}\in E\}$. Then $M'$ is a matching of size $n'/2$ in $K_n$.
Because the weight of an edge in $K_n$ is defined as $\min\{f_p(q), f_q(p)\}$, \revised{the least weight among edges in $M'$ is $U_{p_{l_e}^{\pi'}}(\pi')=f_{p_{l_e}^{\pi'}}(q')$.}
Since we have $U_{p_{l_e}^{\pi'}}(\pi')> U_{p_l^{\pi}}(\pi)$, this contradicts that $M$ is a maximin matching of size $n'/2$.
\end{proof}

For {\sf EFA}, we show several cases that can be solved in polynomial time. 
The following theorem can be shown by taking a perfect matching on the {\em best-preference graph} $G^{\rm best}_{\preferencefamily_{\agents}}=(\agents, E'_{\preferencefamily_{\agents}})$, where $E'_{\preferencefamily_{\agents}} = \{\{p,q\} \in E_{\preferencefamily_{\agents}} \mid f_p(q) \geq f_p(q')  \text{ for all } q'\in \agents \setminus \{p\} \text{ and } f_q(p) \geq f_q(p') \text{ for all }  p'\in \agents \setminus \{q\} \}$. Note that  $G^{\rm best}_{\preferencefamily_{\agents}}$ is a bidirectional graph and hence it can be regarded as an undirected graph.

\begin{theorem}\label{thm:envy:matching}
{\sf EFA} can be solved in polynomial time if each connected component of the seat graph is an edge.
\end{theorem}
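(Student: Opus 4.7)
The plan is to reduce the envy-freeness question, when every connected component of the seat graph is an edge, to finding a perfect matching in the best-preference graph $G^{\rm best}_{\preferencefamily_{\agents}}$. The starting observation is that, under this assumption, each vertex of $G$ has exactly one neighbor, so any arrangement $\pi$ simply pairs up the agents into $n/2$ pairs. Writing $p'$ for the unique agent at the vertex adjacent to $\pi(p)$, we have $U_p(\pi) = f_p(p')$, and a $(p,q)$-swap with $q \neq p'$ replaces $p$'s partner by $q$'s old partner $q'$ (the $(p,p')$-swap being a no-op).

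The heart of the argument is the characterization that $\pi$ is envy-free if and only if, for every agent $p$ with partner $p'$, we have $f_p(p') = \max_{r \in \agents \setminus \{p\}} f_p(r)$. The easy direction ($\Leftarrow$) is immediate: for any candidate $q \neq p, p'$ with partner $q'$, maximality gives $f_p(q') \leq f_p(p')$, so $p$ does not envy $q$. For ($\Rightarrow$), assume $p'$ is not a maximizer for $f_p$ and pick $r \in \agents \setminus \{p\}$ with $f_p(r) > f_p(p')$; then $r \neq p'$, so $r$ sits in a pair distinct from $\{p,p'\}$, and its partner $r'$ satisfies $r' \notin \{p,p'\}$. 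The $(p, r')$-swap makes $r$ the new partner of $p$, strictly improving its utility, so $p$ envies $r'$, contradicting envy-freeness.

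This characterization says exactly that the pairing induced by an envy-free arrangement must be a perfect matching in $G^{\rm best}_{\preferencefamily_{\agents}}$, and conversely any perfect matching there can be realised as an envy-free arrangement by laying its edges bijectively onto the edges of the seat graph. It therefore suffices to decide whether $G^{\rm best}_{\preferencefamily_{\agents}}$ admits a perfect matching, which is done in polynomial time via Edmonds's algorithm~\cite{edmonds_1965}. The main obstacle is the forward direction of the characterization: one must carefully ensure that the witness $r$ lies outside $\{p, p'\}$ so that the proposed swap is well-defined and strictly improves $p$'s utility, and one should double-check that the edge condition $\{p,q\} \in E_{\preferencefamily_{\agents}}$ built into $G^{\rm best}_{\preferencefamily_{\agents}}$ correctly captures admissible pairs (in particular, handling agents whose preferences are identically zero as a degenerate case).
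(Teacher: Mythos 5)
Your proof is correct and follows essentially the same route as the paper: characterize envy-free arrangements on a perfect matching of edges as mutual-best pairings, then decide existence via a perfect matching in the mutual-best-preference graph using Edmonds's algorithm. Your explicit verification that the witness $r'$ lies outside $\{p,p'\}$, and your caveat about pairs whose mutual best preference equals $0$ (which the paper's definition of $E_{\preferencefamily_{\agents}}$ technically excludes), are both sound refinements of the same argument.
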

 \begin{proof}
\revised{We first observe that each agent must match to the most preferable agent on an edge.
If not so, an agent that does not match to the most preferable agent envies the agent that matches to it.
Thus, we consider the {best preference graph} $G^{\rm best}_{\preferencefamily_{\agents}}=(\agents, E'_{\preferencefamily_{\agents}})$, where $E'_{\preferencefamily_{\agents}} = \{\{p,q\} \in E_{\preferencefamily_{\agents}} \mid f_p(q) \geq f_p(q')  \text{ for all } q'\in \agents \setminus \{p\} \text{ and } f_q(p) \geq f_q(p') \text{ for all }  p'\in \agents \setminus \{q\} \}$.}

\revised{As we observed above, $p$ and $q$ are matched in an envy-free arrangement in $G$ if and only if $f_p(q) \geq f_p(q')$  for all $q'\in \agents \setminus \{p\}$ and $f_q(p) \geq f_q(p')$for all   $p'\in \agents \setminus \{q\} \}$, which means that $p$ and $q$ are matched in an envy-free arrangement if and only if there exists an edge $\{p,q\}\in E'_{\preferencefamily_{\agents}}$.}

\revised{Therefore, we can see that there is a perfect matching in $G^{\rm best}_{\preferencefamily_{\agents}}$ if and only if there is an envy-free arrangement in $G$. Such a perfect matching can be computed in polynomial time and we complete the proof.}
\end{proof}

\begin{theorem}\label{thm:symmetric:envy}
Symmetric \textsf{EFA} can be solved in polynomial time if each connected component of the seat graph has at most two vertices.
\end{theorem}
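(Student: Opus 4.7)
The plan is to extend the mutual-best matching approach of Theorem \ref{thm:envy:matching} to handle the isolated vertices in the seat graph. Write $V_E$ for the vertices of $G$ lying on an edge and $V_I$ for the isolated vertices, with $n_E = |V_E|$ and $n_I = |V_I|$; if $n_I = 0$, Theorem \ref{thm:envy:matching} applies directly, and if $n_E = 0$ every arrangement gives zero utility to every agent and is envy-free. So I focus on the case $n_E, n_I > 0$.

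First I would establish a structural characterization of symmetric envy-free arrangements. For any arrangement $\pi$, put $S = \pi^{-1}(V_E)$, $T = \pi^{-1}(V_I)$, and let $H$ be the \emph{positive preference graph} on $\agents$ with an edge $\{p,q\}$ whenever $f_p(q) > 0$. Expanding the envy condition via $(p,q)$-swaps shows that $\pi$ is envy-free iff (i) for every matched pair $\{p,q\}$ in $S$, $f_p(q) \geq 0$ and $f_p(q) = \max_{r \in S \setminus \{p\}} f_p(r)$, and (ii) for every $p \in T$, $f_p(r) \leq 0$ for all $r \in S$. Combining (ii) with the symmetry of $f$, the set $T$ (and hence $S$) is a union of connected components of $H$; inside each non-singleton component $C \subseteq S$, condition (i) forces the matching restricted to $C$ to consist of within-$C$ mutual-best pairs, and each singleton placed in $S$ must be paired with another singleton via a zero-preference edge (pairing a singleton with an agent in a non-singleton component would give the latter a strictly worse partner than its within-component best, violating (i)).

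Next, I would turn the characterization into a three-phase polynomial-time algorithm. First, compute $H$ and its components, and declare a non-singleton component $C$ \emph{eligible} if $|C|$ is even and the within-$C$ mutual-best subgraph admits a perfect matching, verified by Edmonds's algorithm. Second, on the singleton set $W$, build the auxiliary graph $G_0$ with an edge $\{p,q\}$ whenever $f_p(q) = 0$, and compute its maximum matching size $\mu$. Third, run a subset-sum dynamic program over the (even) sizes of eligible components to obtain the set $\Sigma$ of achievable sums, and return \yes\ iff $n_E - 2k \in \Sigma$ for some $0 \leq k \leq \mu$; in the \yes\ case, the chosen eligible components and singleton-pairs are placed on the $n_E/2$ edges of $G$ and the remaining agents on isolated vertices. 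Every phase runs in polynomial time because the subset-sum target is bounded by $n$.

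The hard part will be the sufficiency direction of the structural characterization: given a feasible selection, one must construct the arrangement explicitly and verify envy-freeness for every agent by case analysis across matched pairs inside eligible components, singleton zero-preference pairs, and isolated-vertex agents. The enabling fact is that positive preferences never cross $H$-component boundaries, so agents in $S$ have nothing positive to envy in $T$ and vice versa; making this observation cleanly rule out every possible envy case, particularly when some preference values are exactly zero, is the most delicate step.
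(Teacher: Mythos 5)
Your proposal is correct and follows essentially the same route as the paper's proof: both rest on the observation that each connected component of the positive-preference graph must be placed entirely on edges or entirely on isolated vertices, that on-edge pairs must be mutual-best (hence positive components need even size and a perfect matching in best-preference edges, while the remaining seat edges are filled by mutual-zero pairs of positive-free agents), and both then pack the seat edges via a polynomial knapsack/subset-sum combined with a maximum matching on the zero-preference graph. The only differences are cosmetic: the paper preprocesses agents with all-negative preferences separately and maximizes the knapsack value rather than enumerating achievable sums, but these are equivalent to your treatment.
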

\begin{proof}
Let $(G, \agents, \preferencefamily_{\agents})$ be an instance of symmetric \textsf{EFA}.
We denote by $E$ and $I$ the sets of edges and isolated vertices of $G$, respectively.
By Theorem~\ref{thm:envy:matching}, we may assume that $I \ne \emptyset$.
Now, if there is an agent $p$ such that $f_p(q)<0$ for all $q\in \agents \setminus \{p\}$,
it has to be assigned to an isolated vertex $v \in I$.
If not so, $p$ envies the agents assigned to isolated vertices.
Thus, we can reduce the instance $(G, \agents, \preferencefamily_{\agents})$ to  $(G-v, \agents\setminus \{p\}, \preferencefamily_{\agents\setminus \{p\},})$.
In the following, we assume that there is no such agent.

Let $\agents_{0} = \{p \in \agents \mid \max_{q \in \agents \setminus \{p\}} f_{p}(q) = 0\}$
and $\agents_{+} = \{p \in \agents \mid \max_{q \in \agents \setminus \{p\}} f_{p}(q) > 0\}$.
The assumption above implies that $\agents = \agents_{0} \cup \agents_{+}$.
Let $H_{0} = (\agents_{0}, E_{0})$ be the undirected graph with $E_{0} = \{\{p,q\} \mid f_{p}(q) = f_{q}(p) = 0\}$.
Similarly, let $H_{+} = (\agents_{+}, E_{+})$ be the undirected graph with $E_{+} = \{\{p,q\} \mid f_{p}(q) = f_{q}(p) > 0\}$.

Let $C$ be a connected component of $H_{+}$.
Observe that if $C$ contains an edge $\{u,v\}$ such that 
$u$ is assigned to an endpoint of $e \in E$
and $v$ is assigned to a vertex in $I$,
then $v$ envies the agent assigned to the other endpoint of $e$.
Since $C$ is connected, this implies that in every envy-free arrangement
either all agents in $C$ are assigned to $E$
or all agents in $C$ are assigned to $I$.

Observe also that if two agents with a negative mutual preference are assigned to an edge in $E$,
then they envy the agents assigned to $I$.
Therefore, for each $e \in E$, the agents assigned to $e$ are adjacent in $H_{0}$ or $H_{+}$.
Furthermore, if an agent $p \in \agents_{+}$ is assigned to an endpoint of $e \in E$,
then all neighbors of $p$ in $H_{+}$ are assigned to endpoints of some edges in $E$, and thus
the agent assigned to the other endpoint of $e$ has to be one of the most preferable ones
to make $p$ envy-free.

The discussion so far implies that 
there is an envy-free arrangement if and only
$E$ can be completely packed with
some connected components of $H_{+}$ and some edges of $H_{0}$
so that the agents assigned to each $e \in E$ are best-preference pairs.

For each component $C$ of $H_{+}$,
we check whether $C$ has a perfect matching that uses only the best-preference edges.
If $C$ has no such matching, then it has to be packed into $I$.
Let $C_{1}, \dots, C_{h}$ be the connected components of $H_{+}$ with such perfect matchings.

We now compute the maximum value $k$ of the \textsc{Maximum 0-1 Knapsack} instance with $h$ items such that the weight and the value of the $i$th item are $|C_i|/2$ and the budget is $m = |E|$.
This can be done in time polynomial in $n$~\cite{GJ1979}.
We can see that there is an envy-free arrangement if and only if $k + k' \ge m$,
where $k'$ is the size of a maximum matching of $H_{0}$.
An envy-free arrangement can be constructed by 
first packing the best-preference perfect matchings in the components corresponding to the chosen items into $E$,
and then packing a matching of size $m - k \le k'$ in $H_{0}$ into the unused part of $E$.
\end{proof}


\begin{theorem}
{\sf EFA} can be solved in polynomial time if  each connected component has at most two vertices and the preferences are positive.
\end{theorem}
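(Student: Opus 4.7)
The plan is to split the argument into sub-cases depending on the structure of the seat graph $G$ and the type of preferences. The two extreme cases are immediate: if every component of $G$ is a single edge, Theorem~\ref{thm:envy:matching} applies directly; if every component is an isolated vertex, every arrangement is trivially envy-free since all utilities are $0$. The substantive case is when $G$ contains both isolated vertices and edges. Writing $A_I$ and $A_E$ for the agents placed on isolated vertices and on edge-endpoints in a putative envy-free arrangement, two necessary conditions fall out of the envy-freeness definition: first, $f_r(s) \le 0$ for every $r \in A_I$ and every $s \in A_E$ (otherwise $r$ envies $s$'s edge-partner); second, for each edge-pair $\{p,q\}$ in the arrangement, the partner $q$ equals $\argmax_{s \in A_E \setminus \{p\}} f_p(s)$ with the symmetric condition for $q$, and moreover $f_p(q) \ge 0$ and $f_q(p) \ge 0$ (otherwise $p$ or $q$ envies either another edge-agent or an isolated one).

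For the positive-preferences case, the first condition cannot hold because $f_r(s) > 0$ for every pair, so the algorithm immediately outputs \no. For the strict-preferences case, the first condition states that $A_I$ is forward-closed in the positive-preference digraph $D = (\agents, \{(r,s) : f_r(s) > 0\})$, and any agent $p$ with $\max_{q \ne p} f_p(q) < 0$ strictly prefers being isolated and must lie in $A_I$. Iterating the forward-closure in $D$ from this seed yields a mandatory subset $A_I^{\star} \subseteq A_I$. If $|A_I^{\star}| > |I|$, I output \no; otherwise the next step extends $A_I^{\star}$ to a forward-closed set of size exactly $|I|$ by a knapsack-style dynamic program on the strongly connected component condensation of $D$, in the spirit of the knapsack subroutine used in the proof of Theorem~\ref{thm:symmetric:envy}. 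For each candidate $A_I$, strictness makes the second condition straightforward to verify: each $p \in A_E$ has a unique top $b_{A_E}(p) := \argmax_{s \in A_E \setminus \{p\}} f_p(s)$, and I check whether the functional map $b_{A_E}$ is an involution on $A_E$ with all entries $f_p(b_{A_E}(p)) \ge 0$; if so, pairing each $p$ with $b_{A_E}(p)$ on an edge and placing the remaining agents on the isolated vertices gives a valid envy-free arrangement.

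The main obstacle is keeping the candidate-enumeration step polynomial: the forward-closure constraint imposes a precedence structure on the condensation DAG of $D$, and in general finding an ideal of a given target size in a DAG is subset-sum-hard. The argument relies on the fact that SCC sizes are positive integers bounded by $n$ and the target $|I| \le n$, so that the pseudo-polynomial knapsack DP on the condensation runs in polynomial time, mirroring the treatment of $H_{+}$-components in Theorem~\ref{thm:symmetric:envy}. Strictness further helps by giving each agent a unique top, so that the per-candidate check reduces to a single deterministic involution test rather than a search over feasible matchings.
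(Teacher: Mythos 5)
Your reduction of the mixed case to the two conditions (a) ``no agent on an isolated vertex has a positive preference towards any agent on an edge'' and (b) ``each edge-agent is paired with a partner that is its $\argmax$ \emph{among the edge-agents} and has nonnegative value'' is correct, and in fact more careful than the paper's own argument. Your handling of the positive case (immediate \no{} whenever edges and isolated vertices coexist, otherwise fall back on the all-edges or all-isolated case) is exactly the paper's argument. For the strict case the paper is much terser: it asserts that every agent must be matched with the agent it prefers most and that one only needs to check mutual-top pairs; your analysis correctly exposes that the relevant ``top'' in condition (b) is the top \emph{within} $A_E$, not the global top, which is precisely where the subtlety lies when isolated vertices are present.

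The genuine gap is in your algorithmic realization of the strict case. Because condition (b) is relative to the chosen set $A_E$, whether the involution test passes depends on which forward-closed set $A_I$ you commit to, and your proposal does not bound the number of candidates you would have to try. Concretely: (i) a plain pseudo-polynomial knapsack DP on the SCC condensation does not respect the precedence constraints of a forward-closed set --- that is the partially ordered knapsack problem, and the difficulty there is the precedence structure, not the magnitude of the weights; the knapsack in Theorem~\ref{thm:symmetric:envy} works only because each component's feasibility (a perfect matching on \emph{globally} best-preference edges) is checked independently of the packing choice, whereas your per-candidate involution test is a global property of the complement. (ii) Even granting that size-$|I|$ forward-closed supersets of the seed exist and are easy to construct (take a suffix of a topological order), there may be exponentially many of them, and you give no argument that testing one, or polynomially many, suffices, nor that a failed involution test for one candidate says anything about another. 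So the step ``for each candidate $A_I$'' is not a polynomial-time step as described, and the strict case remains unproved in your write-up. (The paper's one-line treatment of the strict case glosses over the same global-versus-relative-top distinction; your characterization makes the difficulty visible but does not resolve it.)
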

\begin{proof}
If the preferences are positive, whenever there is an isolated vertex in the seat graph, then an agent assigned to it envies other agents assigned to an edge. Thus, we can suppose that there is no isolated vertex and apply Theorem \ref{thm:envy:matching}.
\end{proof}

\rerevised{Recently,} Ceylan, Chen, and Roy \rerevised{have shown} that {\sf EFA} 
remains NP-hard even for seat graphs where each connected \revised{component} has at most two vertices and the preferences are strict~\cite{SA:Ceylan0R23}. Thus, the boundary of the complexity of {\sf EFA} is shown.


\subsection{Intractable case}
First, we show that \textsc{Stable Roommates}  with the complete preference lists under exchange stability  can be transformed into 
{\sf STA}.
Let $n$ be the number of agents in \textsc{Stable Roommates}.
According to the complete preference order of agent $p$ in \textsc{Exchange Stable Roommates}, one can assign values from $1$ to $n-1$ to the preferences of $p$ to other agents in {\sf STA}. Moreover, let $G$ be the seat graph consisting of $n/2$ disjoint edges.
Then it is easily seen that there is a stable matching if and only if there is a stable arrangement in $G$. Since \textsc{Stable Roommates} with complete \revised{and strict} preference lists under exchange stability is NP-complete \cite{Cechlarova2002,Cechlarova2005}, {\sf STA} is also NP-complete.
\begin{theorem}
{\sf STA} is NP-complete even if the preferences are positive \revised{and strict,} and each component of the seat graph is of order two.
\end{theorem}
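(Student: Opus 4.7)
The plan is to establish the theorem by a straightforward reduction from \textsc{Exchange-Stable Roommates} with complete preference lists, which is NP-complete by Cechl\'arov\'a and Manlove~\cite{Cechlarova2002,Cechlarova2005}, precisely as indicated in the paragraph preceding the statement. Membership in NP is immediate: given a candidate arrangement $\pi$, I can in polynomial time enumerate all $\binom{n}{2}$ pairs $(p,q)$, compute the $(p,q)$-swap arrangement, and check whether it strictly increases the utility of both $p$ and $q$; if no such pair is found, $\pi$ is stable.

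For the hardness part, let an instance of \textsc{Exchange-Stable Roommates} consist of $n$ agents (with $n$ even) each holding a strict and complete preference order over the other $n-1$ agents. I would build an instance of \textsf{STA} with the same agent set $\agents$, take the seat graph $G$ to be a disjoint union of $n/2$ edges, and turn each ordinal preference list into a cardinal preference function $f_p:\agents\setminus\{p\}\to\{1,2,\dots,n-1\}$ by assigning the value $n-1$ to the most preferred agent, $n-2$ to the next, and so on down to $1$ for the least preferred. These preferences are positive and each component of $G$ has order two, matching the restrictions in the statement.

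The key step is to verify that stability transfers under this reduction. Since every component of $G$ is an edge, an arrangement $\pi$ corresponds bijectively to a perfect matching $M_\pi$ of $\agents$, and the utility of an agent is exactly $f_p(q)$ where $q$ is its partner in $M_\pi$. Because the cardinal values preserve the original ordinal ranking, for any two matched pairs $\{p,q\}$ and $\{p',q'\}$ in $M_\pi$, agent $p$ strictly prefers (in the roommates sense) its swapped partner $q'$ to $q$ if and only if $f_p(q')>f_p(q)$, i.e., if and only if the utility of $p$ strictly increases in the $(p,p')$-swap arrangement. Hence exchange-blocking pairs in the roommates instance coincide exactly with blocking pairs of $\pi$ in the arrangement instance, and so a stable arrangement in $G$ exists if and only if the original instance admits an exchange-stable matching.

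I do not expect a real obstacle here, since the definitions of exchange stability (as recalled in the Related Work) and of blocking pair in \textsf{STA} line up one-to-one on graphs consisting of disjoint edges; the only point requiring care is to make sure the cardinal encoding faithfully represents the ordinal preferences and uses only positive values, which the assignment of the integers $1,\dots,n-1$ achieves.
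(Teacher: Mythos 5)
Your proposal is correct and follows essentially the same route as the paper: a reduction from \textsc{Exchange-Stable Roommates} with complete (strict) preference lists, encoding each agent's ordinal ranking by the positive values $1,\dots,n-1$ and taking the seat graph to be $n/2$ disjoint edges, so that blocking pairs coincide in the two models. Your additional explicit check of NP membership and of the ordinal-to-cardinal correspondence only fills in details the paper leaves implicit.
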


We also show that {\sf StrictSTA} is NP-complete.
\begin{theorem}
{\sf StrictSTA} is NP-complete even if the preferences are positive and strict, and each component of the seat graph is of order two.
\end{theorem}
\begin{proof}
    It is sufficient to show that there exists a stable assignment on the seat graph $G$ of order two if and only if there exists a strictly stable assignment on $G$ under positive and strict preferences.

    By definition, the opposed direction is trivial. For the forward direction, suppose that a stable assignment $\pi$ is not strictly stable. Then there exists a blocking pair $(p,q)$ such that the utility of $p$ strictly increases and the utility of $q$ does not decrease after the swap. Since the preferences are strict, after the swap, the utility of $q$ also increases, which contradicts the fact that $\pi$ is stable.
\end{proof}

\color{black}

Then we prove that symmetric {\sf EFA} is NP-complete even if each connected component has at most three vertices.
\begin{theorem}\label{thm:EFA:symmetric:hard}
Symmetric {\sf EFA} is NP-complete even if each connected component of the seat graph has at most three vertices.
\end{theorem}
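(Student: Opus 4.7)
The plan is a polynomial-time reduction from \textsc{Exact Cover by 3-Sets} (X3C), which is NP-complete; containment in NP is immediate because envy-freeness of a candidate arrangement $\pi$ can be verified by iterating over pairs $p,q$ of agents and checking whether the $(p,q)$-swap arrangement $\pi'$ satisfies $U_{p}(\pi')>U_{p}(\pi)$, all in polynomial time.

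Given an X3C instance $(U, \mathcal{S})$ with $|U|=3q$ and $\mathcal{S}=\{S_{1},\ldots,S_{m}\}$, I would construct an instance of symmetric {\sf EFA} whose seat graph is a disjoint union of $q+m$ triangles, possibly augmented with a bounded number of isolated vertices or edges hosting auxiliary \emph{witness} agents (each forming its own component of size at most three). The agents would be: an \emph{element agent} $e_{u}$ for each $u \in U$, three \emph{set agents} $s_{S}^{1},s_{S}^{2},s_{S}^{3}$ for each $S \in \mathcal{S}$, and witness agents used to enforce triple consistency. The symmetric preferences would be chosen so that (i)~within each set triple $\{s_{S}^{1},s_{S}^{2},s_{S}^{3}\}$ the mutual preference is a large value $M$; (ii)~element agents $e_{u}, e_{v}$ have positive mutual preference reflecting their co-occurrence in some $S \in \mathcal{S}$; and (iii)~each witness agent is attracted only to the elements of the set it oversees. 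Cross-type preferences and preferences between unrelated pairs are $0$, with large negative values used whenever separation must be enforced. Given an exact cover $\mathcal{S}'\subseteq\mathcal{S}$, placing $\{e_{u} : u \in S\}$ into a single triangle for each $S \in \mathcal{S}'$ and $\{s_{S}^{1},s_{S}^{2},s_{S}^{3}\}$ into a single triangle for each $S \in \mathcal{S}$ (while leaving the witnesses on their auxiliary seats) should give every agent a maximum attainable utility, which is envy-free.

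The reverse direction is where the bulk of the proof sits. Using~(i), one first argues that in any envy-free arrangement the three set agents of each $S$ must share a single triangle, because otherwise an isolated $s_{S}^{i}$ would envy the third occupant of the triangle containing $s_{S}^{j}, s_{S}^{k}$, gaining utility $2M$ at no cost elsewhere. This locks in $m$ triangles and leaves $q$ triangles for the $3q$ element agents. The main obstacle lies in forcing each such element triangle to match some $S \in \mathcal{S}$: pairwise co-occurrence of preferences between elements is too weak, since three elements can be pairwise compatible without jointly forming a set, and elements may moreover lie in several sets simultaneously. The witness agents are intended to close this gap by making deviations detectable---a witness $w_{S}$ whose highest-value partners are exactly the three members of $S$ would envy the odd occupant of any triangle holding a proper subset of $S$, forcing each element triangle to coincide with some member of $\mathcal{S}$. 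Calibrating the preference values carefully, and possibly refining the gadget when an element lies in many sets (for example by using one witness per (element, set) incidence, together with enough large negative preferences so that witnesses can only be envy-free when their target set is either fully packed into a triangle or entirely absent from the element triangles), is the central technical challenge; once resolved, the $q$ element triangles form an exact cover of $U$ by members of $\mathcal{S}$, completing the reduction.
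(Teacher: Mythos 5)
Your reduction is left unfinished at exactly the step that carries all the difficulty, and you say so yourself: you do not actually construct the witness gadget that forces each element triangle to coincide with some $S\in\mathcal{S}$, you only describe what it would need to accomplish. As written, the gadget also has an internal tension that is not obviously resolvable. A witness $w_S$ parked on an auxiliary seat (an isolated vertex or a short edge) has low utility there; if its preferences toward the three members of $S$ are positive, then in the \emph{forward} direction, where those three members sit together in a triangle, $w_S$ envies any one of them (swapping in raises its utility from roughly $0$ to something positive), so your intended envy-free arrangement is not envy-free. If instead you saturate $w_S$ on its auxiliary seat so that it never envies anyone, it can no longer detect a bad triple in the reverse direction. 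Making a witness envy precisely the triangles holding a \emph{proper nonempty} subset of $S$, but not the triangle holding all of $S$ and not the triangles disjoint from $S$, is the whole content of the reduction, and the proposal does not supply it. In the current state this is a plan for a proof, not a proof.

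The paper avoids this obstacle entirely by choosing a different source problem: \textsc{Partition into Triangles} rather than X3C. There the admissible triples are exactly the triangles of the input graph $G$, so ``three pairwise compatible elements'' automatically form an admissible triple, and purely pairwise symmetric preferences (weight $1$ on edges of $G$, weight $0$ on non-edges) suffice, which is precisely the property X3C lacks. The only gadget needed is a single extra triple of super agents $\{x,y,z\}$ with mutual preference $2$ and preference $1$ to everyone else, seated in one extra triangle: one first shows $x,y,z$ must share a triangle, and then any agent sitting in a triangle containing a non-edge of $G$ has utility at most $1$ and envies $x$ (swapping with $x$ yields utility $2$ from $y$ and $z$). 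If you want to salvage your approach, either switch to \textsc{Partition into Triangles} as the source problem, or fully specify and verify the witness calibration, including the forward direction for witnesses of sets both inside and outside the exact cover.
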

\begin{proof}
We give a reduction from \textsc{Partition into Triangles}: given a graph $G=(V,E)$, determine whether $V$ can be partitioned into 3-element sets $S_1, \ldots, S_{|V|/3}$ such that each $S_i$ forms a triangle $K_3$ in $G$.
The problem is NP-complete  \cite{GJ1979}.

Given a graph $G=(V,E)$, we construct the instance of {\sf EFA}.
First, we set $\agents=V\cup \{x,y,z\}$. Three agents $x,y,z$ are called {\em super agents}.
Then we define the preferences as follows.
For $p,q\in \{x,y,z\}$, we set $f_p(q)=f_q(p)=2$.
For $p\in \{x,y,z\}, q\in V$, we set $f_p(q)=f_q(p)=1$.
Finally, for $p,q\in V$, we set $f_p(q)=f_q(p)=1$ if $\{p,q\}\in E$, and otherwise, 
$f_p(q)=f_q(p)=0$.
Clearly, the preferences are symmetric.
The seat graph $H$ consists of $|V|/3+1$ disjoint triangles.

Given a partition $S_1, \ldots, S_{|V|/3}$ of $V$, we assign them to triangles in the seat graph. Moreover, we assign $\{x,y,z\}$ to a triangle.
Then the utilities of agents in $V$ are 2 and the utilities of $x,y,z$ are 4, respectively.
Since these utilities are maximum for all agents, this arrangement is envy-free.

Conversely, we are given an envy-free arrangement $\pi$.
\begin{nestedclaim}\label{claim:envy:hardness}
In any envy-free arrangement $\pi$, $\{x,y,z\}$ is assigned to the same triangle in $H$.
\end{nestedclaim}
\begin{claimproof}
Suppose that $x$ is assigned to a triangle $T_x$ and $y$ is assigned to another triangle $T_y$.
If $z$ is assigned to $T_x$, $y$ envies the agent $p$ in $V$ assigned to $T_x$ because the utility of $y$ is increased from 2 to 4 by swapping $y$ and $p$.
Similarly, if  $z$ is assigned to another triangle $T_z$, $y$ envies the agent in $V$ assigned to $T_x$ because the utility of $y$ is increased from 2 to 3 by swapping $y$ and $p$.
Thus, $\{x,y,z\}$ must be assigned to the same triangle.
\end{claimproof}
Then, if three agents $p,q,w\in V$ such that $\{p,q\}\notin E$ are assigned to the same triangle in $H$, $p$ envies $x$  because the utility of $p$ is increased from 1 to 2 by swapping $p$ and $x$.
Since $\pi$ is envy-free, for each triangle assigned to $p,q,r\in V$, they satisfy $\{p,q\},\{q,r\},\{r,p\}\in E$.
This implies that there is a partition $S_1, \ldots, S_{|V|/3}$ of $V$ in $G$.
\end{proof}

By definition, if the seat graph is a complete graph, all the arrangements are equivalent.
However, if the seat graph consists of a clique and an independent set, {\sf EFA} is NP-complete. The reduction is from \textsc{$k$-Clique}, \rerevised{which is NP-complete \cite{GJ1979}.}
\begin{theorem}\label{thm:EFA:clique}
{\sf EFA} is NP-complete even if the seat graph consists of  a clique and an independent set.
\end{theorem}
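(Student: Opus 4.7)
The plan is to reduce from \textsc{$k$-Clique}: given a graph $H=(V(H),E(H))$ with $n=|V(H)|$ and integer $k$, decide whether $H$ contains a clique of size $k$. I would build an {\sf EFA} instance whose seat graph is the disjoint union of a clique on $k$ vertices and an independent set $I$ on $n-k$ vertices, whose set of agents is $\agents = V(H)$, and whose preferences are $f_p(q) = 0$ if $pq \in E(H)$ and $f_p(q) = -1$ otherwise, for all distinct $p,q$. Intuitively, each non-edge of $H$ penalizes its endpoints by one unit whenever they are seated together on the clique side of the seat graph, while agents placed on isolated seats always have utility $0$.

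For any arrangement $\pi$, let $S = \pi^{-1}(V(K_k))$ denote the set of agents placed on the clique. Every $p \in S$ then has utility $U_p(\pi) = -(k-1 - |N_H(p)\cap S|)$, which equals $0$ when $p$ is $H$-adjacent to all other members of $S$ and is strictly negative otherwise, while every $q \notin S$ has utility $0$. Swaps within a single component are automatically harmless: swapping two agents of $K_k$ preserves each one's $G$-neighborhood, and swapping two isolated agents leaves both utilities at $0$. For a cross-component swap of $p \in S$ with $q \notin S$, the new utility of $q$ is $-(k-1-|N_H(q)\cap (S\setminus\{p\})|) \le 0$, so $q$ never strictly improves over its current $0$ and thus does not envy $p$; meanwhile $p$ strictly improves iff $U_p(\pi) < 0$, i.e., iff $|N_H(p)\cap S| < k-1$. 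Hence $\pi$ is envy-free iff every $p \in S$ is $H$-adjacent to every other member of $S$, which is exactly the statement that $S$ induces a $k$-clique of $H$.

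This yields a polynomial-time many-one reduction from \textsc{$k$-Clique} to {\sf EFA} on seat graphs consisting of a clique and an independent set; membership in NP is immediate by checking all pairs of agents for envy, so the problem is NP-complete. The forward direction of the equivalence (given a $k$-clique $S$ of $H$, place $S$ on $K_k$ to obtain an envy-free arrangement with every utility equal to $0$) is then a one-line verification. The only step that requires any care is the exhaustive envy case analysis, but the symmetry of $K_k$ and the zero utility at isolated seats make the case split short, so I do not anticipate a significant obstacle.
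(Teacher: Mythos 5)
Your reduction is correct, but it is genuinely different from the one in the paper. The paper also reduces from \textsc{$k$-Clique}, yet it introduces one agent per edge of $H$ plus $M$ copies of an agent per vertex, uses only preference values in $\{0,1\}$, and makes the \emph{independent-set} side of the seat graph (of size $Mk+k(k-1)/2$) the place where the clique is selected; correctness then rests on counting arguments forcing exactly $k(k-1)/2$ edge-agents and $Mk$ vertex-agents onto the isolated seats. You instead take the agents to be $V(H)$ itself, put the selection on the \emph{clique} side of the seat graph (exactly $k$ seats), and use the penalty $f_p(q)=-1$ for non-adjacent pairs so that an agent seated in the clique next to a non-neighbor strictly prefers an isolated seat, while no isolated agent ever strictly gains by moving in. Your case analysis is complete and the equivalence ``envy-free iff the clique-seated agents induce a $k$-clique'' holds. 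What each approach buys: yours is considerably shorter and even yields symmetric preferences; the paper's keeps the preferences binary and nonnegative, which is a meaningful restriction in this paper (preference classes such as binary, nonnegative and positive are treated as distinguished cases elsewhere), though the theorem as stated does not require it. One minor point you should make explicit: your argument needs at least one isolated seat (otherwise every arrangement on a bare clique is trivially envy-free), so you should assume without loss of generality that $k<n$, which is harmless since \textsc{$k$-Clique} with $k=n$ is decidable in polynomial time.
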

\begin{proof}

We give a reduction from \rerevised{\textsc{$k$-Clique}}.
We are given a graph $G=(V,E)$. 
For each $e\in E$, we use the corresponding agent $p_e$.
 For each $v\in V$, we make $M$ agents $p_v^{(1)}, \ldots, p_v^{(M)}$, \revised{where $M=n^3$}.
The number of agents is $|E|+M|V|$.
Then we define the preferences of agents.
For $p_e$, we define $f_{p_e}(p_u^{(1)})=f_{p_e}(p_v^{(1)})=1$ if \rerevised{$e=\{u,v\}$}, and otherwise $f_{p_e}(p_u^{(1)})=f_{p_e}(p_v^{(1)})=0$. 
For each $v\in V$ and $i, j$, we set $f_{p_v^{(i)}}(p_v^{(j)})=f_{p_v^{(j)}}(p_v^{(i)})=1$. \revised{For the remaining preferences for $p$ and $q$ still not defined, we set $f_p(q)=f_q(p)=0$.}
Finally, we define the seat graph $G'=(I\cup C, E')$ as a graph consisting of an independent set $I$ of size $Mk+k(k-1)/2$ and a clique $C$ of size $|E|+(|V|-k)M-k(k-1)/2$.

In the following, we show that there is a clique of size $k$ in $G$ if and only if there is an envy-free arrangement in $G'$.
Given a clique of size $k$, we assign all agents $p^{(i)}_v$ and $p_e$ corresponding to a clique to vertices in $I$. 
Since the number of such agents is $Mk+k(k-1)/2$, the set of vertices not having \revised{an} agent in $G'$ is $C$.
Thus, we assign other agents to vertices in $C$.
Because the utility of an agent on $I$ does not increase even if he is swapped for any agent on $C$, every agent on $I$ is envy-free.
Moreover, an agent on $C$ is envy-free because the utility is at least 1 and $C$ is a clique.
Therefore, such an arrangement is envy-free.

Conversely, we are given an envy-free arrangement $\pi$.
First, we observe the following fact.
\begin{fact}\label{fact1}
 If some $p^{(i)}_v$ is on $I$, all the $p^{(j)}_v$ for $j\neq i$ must be on $I$.
 \end{fact}
Otherwise, $p^{(i)}$ envies some agent on $C$ because the utility increases by moving to $C$.
Also, we have the following fact.
\begin{fact}\label{fact2}
If $p_e$ where $e=\{u,v\}$ is on $I$, all the $p^{(i)}_u$ and $p^{(i)}_v$ must be on $I$.
 \end{fact}
If not so, $p^{(1)}_u$ is on $C$ because every $p^{(i)}_u$ must be on $C$ by Fact 1. 
However, this implies that  $p_e$ envies some agent on $C$.
This is a contradiction.

Now, since $|I| = Mk+k(k-1)/2$ \revised{and $M=n^3$}, at most $Mk$ $p^{(i)}_v$'s are on $I$ from Fact 1.
In other words, there are at most $k$ vertices in $V$ such that $p^{(i)}_v$ is on $I$ for all $i$.
The remaining vertices in $I$  have $p_e$.
From Facts 1 and 2, for every $e=\{u,v\}\in E$ such that $p_e\in I$, all $p^{(i)}_u$ and $p^{(i)}_v$ for all $i$ must be on $I$.
Because the number of $p_e$'s on $I$ is at least $k(k-1)/2$, $I$ must have exactly $k(k-1)/2$ $p_e$'s and $Mk$  $p^{(i)}_v$'s such that $v$ is an endpoint of $e$,  so that $\pi$ is envy-free.
This implies that $\{v\mid p^{(1)}_v\in I\}$ is a clique of size $k$ in $G$.
\end{proof}

Finally, we show that {\sf EFA} is NP-complete even if both the preference graph and the seat graph are restricted. The reduction is from \textsc{3-Partition} \cite{GJ1979}.
\begin{theorem}\label{thm:envy-free:preDAG}
{\sf EFA} is NP-complete even if the preference graph is a directed acyclic graph (DAG) and the seat graph is a tree.
\end{theorem}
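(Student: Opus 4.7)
The plan is to reduce from \textsc{3-Partition}, which is strongly NP-complete~\cite{GJ1979}: given $3m$ positive integers $a_1, \dots, a_{3m}$ with $\sum_j a_j = mB$ and $B/4 < a_j < B/2$ for every $j$, decide whether they can be partitioned into $m$ triples each summing to $B$.

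From such an instance I would build an {\sf EFA} instance as follows. The seat graph is a tree: take a root $r$ adjacent to $m$ ``pod center'' vertices $c_1, \dots, c_m$, and attach three leaves $s_{i,1}, s_{i,2}, s_{i,3}$ to each $c_i$; the resulting graph has $4m+1$ vertices and is clearly a tree. The agents are a root agent $R$, pod agents $P_1, \dots, P_m$, and number agents $N_1, \dots, N_{3m}$. The preferences are $f_R(P_i) = K$ for every $i$, where $K$ is a sufficiently large constant (say $K > mB$); $f_{P_i}(N_j) = a_j$ for all $i,j$; and every other preference value equals $0$. The only non-zero edges of the preference graph are of the form $R \to P_i$ and $P_i \to N_j$, so the preference graph is a two-layer DAG, as required.

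For the forward direction, if $\{S_1, \dots, S_m\}$ is a valid 3-partition, place $R$ at $r$, each $P_i$ at $c_i$, and the three number agents whose indices form $S_i$ at the three leaves $s_{i,1}, s_{i,2}, s_{i,3}$. Then $R$ achieves its maximum possible utility $mK$, each $P_i$ achieves utility $B$, and every number agent has utility $0$. A short case analysis on the $(p,q)$-swap arrangements (root with a pod agent, pod agent with pod agent, pod agent with a number agent at its own or another pod, any agent with a number agent) verifies that no swap strictly improves the envier's utility, so the arrangement is envy-free.

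The main obstacle is the backward direction: showing that every envy-free arrangement encodes a valid 3-partition. I would argue in stages. First, $R$ must occupy $r$: if not, $R$'s utility is at most $4K$ because any vertex other than $r$ has degree at most $4$, while the swap of $R$ with the current occupant of $r$ yields utility $K\cdot k^{*}$, where $k^{*}$ denotes the number of pod agents currently at pod centers. A parallel envy argument on the pod agents---exploiting $B/4 < a_j < B/2$ together with $K > mB$---should force $k^{*} = m$, so every pod center is occupied by a pod agent and every leaf by a number agent. With this canonical structure in hand, $P_i$'s utility equals the sum of the three values at the leaves of $c_i$; if any two such sums differed, the pod agent with the smaller sum would envy the one with the larger sum, so all $m$ sums must equal $B$, yielding the required partition. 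Membership in NP is immediate since envy-freeness can be checked in polynomial time by inspecting all $O(n^2)$ swaps. The delicate part is pinning down $R$ and the pod agents; I expect minor adjustments to the preferences or an additional ``anchor'' gadget attached to each pod may be needed to make these forcing arguments airtight.
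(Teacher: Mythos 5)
Your reduction has the right skeleton --- \textsc{3-Partition}, and exactly the same seat tree (a root joined to $m$ degree-$4$ centers, each with three leaves) as the paper --- but the proof is incomplete precisely where you say it is, and the gap is real rather than cosmetic. The step ``$R$ must occupy $r$'' does not follow from the comparison you give: $R$ envies the occupant of $r$ only if $K\cdot k^{*}$ strictly exceeds $R$'s current utility, which is $K$ times the number of pod agents currently adjacent to $R$; e.g.\ if $R$ sits at a pod center whose leaves hold pod agents, its utility can be as large as $4K$ while $k^{*}$ is small, so this swap shows nothing. Forcing $R$ to $r$ seems to require already knowing that the pod agents occupy the centers, and your ``parallel envy argument'' for that is not supplied; you yourself concede that the preferences or gadget may need to change. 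A theorem whose proof ends with ``I expect minor adjustments \dots may be needed'' is not yet proved.

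The paper resolves exactly this difficulty by orienting the preference DAG the other way. There, the root agent and the element agents have \emph{identically zero} preferences, so they are envy-free in every arrangement and never need to be controlled; the only agents that can envy are the $n$ triple agents, each of which values the root agent at $B$ and element $a$ at $a$. This makes the forcing one-directional and non-circular: if the root agent is not at $v_r$, at most four agents are adjacent to it, so some triple agent has utility below $B$ and envies a neighbor of the root agent; once the root agent is pinned to $v_r$, a triple agent at a leaf has utility below $B$ and again envies, so all triple agents occupy the centers; finally, since every triple agent gets $B$ from the root plus the sum of its three leaf elements, and the sums average to $B$, any imbalance creates envy between two triple agents, which forces every sum to equal $B$. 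I would recommend adopting this ``only the middle layer has non-zero preferences'' trick rather than adding anchor gadgets: it eliminates the circular dependence between the positions of $R$ and of the pod agents that your current argument cannot break.
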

\begin{proof}
We give a reduction from \textsc{3-Partition}.
\revised{Given a set of integers $A=\{a_1, \ldots, a_{3n}\}$, the problem is to find a partition $(A_1, \ldots, A_n)$ such that $|A_i|=3$ and $\sum_{a\in A_i}a=B$ for each $i$ where $B=\sum_{a\in A} a/n$.} We call such a partition a {\em 3-partition}. \revised{\textsc{3-Partition} remains strongly NP-complete even if every integer in $A$ is strictly between $B/4$ and $B/2$~ \cite{GJ1979} .}

First, we prepare $n$ agents $P_T=\{p_{t_1}, \ldots, p_{t_n}\}$ corresponding to the resulting triples and $3n$ agents  $P_A=\{p_{a_1}, \ldots, p_{a_{3n}}\}$ corresponding to elements.
Moreover, we use an agent $p_r$, called a {\em root} agent.
Then we define the preferences and the seat graph as in Figure \ref{fig:fig_EFA_DAG}.
\revised{For each $p_{t}\in P_T$, we set $f_{p_{t}}(p_r)=nB$ and $f_{p_{t}}(p_a)=a$}.
\revised{The remaining preference from $p\in \agents$ to $q\in \agents$ is set to $0$.}
Note that the preference graph is a DAG \revised{(see Figure \ref{fig:fig_EFA_DAG} (left)).}
Then we define the seat graph $G=(V,E)$, which is a tree with the root vertex $v_r$.
The root vertex has $n$ children and \revised{each of its children has} exactly three children.
The number of vertices is $4n+1$.

\begin{figure}[tbp]
 \begin{center}
  \includegraphics[width= \textwidth]{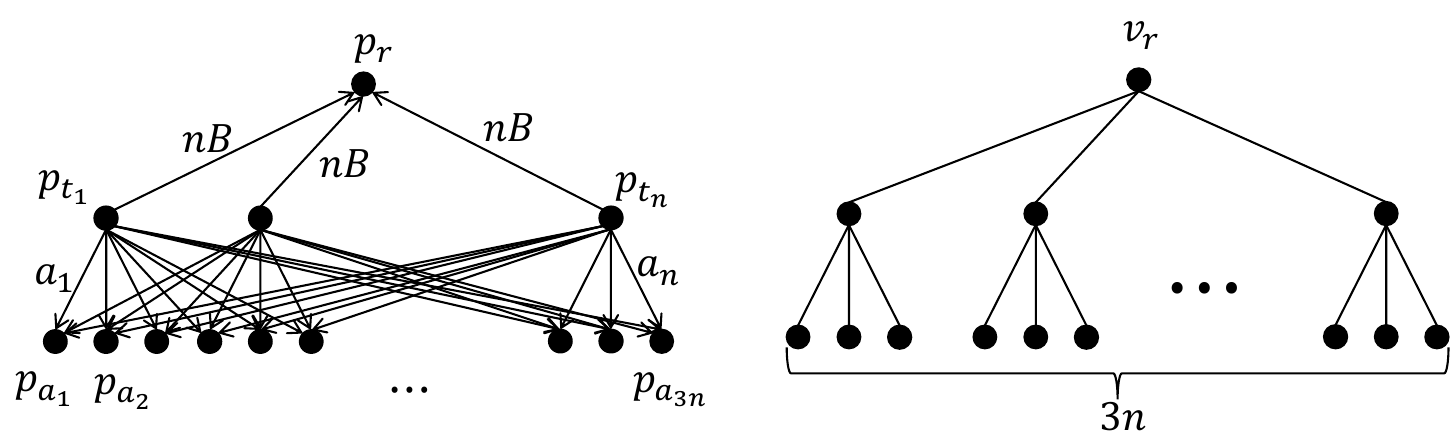}
 \end{center}
 \caption{The preference graph (left) and the seat graph in the proof of Theorem \ref{thm:envy-free:preDAG}}
 \label{fig:fig_EFA_DAG}
\end{figure}

Given a 3-partition $(A_1, \ldots, A_n)$, we assign root agent $p_r$ to $v_r$.
Moreover, for three elements in $A_i$, we assign the three corresponding agents to leaves with the same parent in $G$.
Finally, we assign $p_t$ to an inner vertex in $G$ arbitrarily.

We show that this arrangement, denoted by $\pi$, is envy-free.
By the definitions of preferences, the utilities of $p_r$ and $p_a\in P_A$ are $0$ for any arrangement.
Thus, they are envy-free.
Each $p_t\in P_T$ is also envy-free because every utility of $p_t$ is \revised{$(n+1)B$} and the preferences of $p_t$ to agents in $P_A\cup \{p_r\}$ are identical.
 
Conversely, we are given an envy-free arrangement $\pi$.
Suppose that $p_r$ is not assigned to the root vertex $v_r$.
\revised{Then the degree of $\pi(r)$ is at most 4, and there exists an agent  $q_t\in P_T$ not adjacent to $p_r$ on $\pi$. Since $a_i<B/2$ and the maximum degree of the seat graph is $n$, the utility of $q_t$ is at most $nB/2$. Due to $f_{q_t}(p_r)=nB$, $q_t$ envies a neighbor of $p_r$.
Therefore, $p_r$ must be assigned to $v_r$.}

If an agent $p_t$ in $P_T$ is assigned to a leaf in $G$, $p_t$ does not have $p_r$ as a neighbor.
Thus, the utility of $p_t$ is less than $B/2$, and $p_t$ envies a neighbor of $p_r$. \revised{Therefore, every $p_t$ must be assigned to inner vertices in $T$.
Since the number of inner vertices is $n$, every agent in $P_A$ must be assigned to a leaf.
If there is $p_t\in P_T$ with utility more than $(n+1)B$, there is $p'_t\in P_T$  with utility less than $(n+1)B$ since $\sum_{a\in A} a=nB$ and $f_{p_{t}}(p_r)=nB$ for every $p_{t}\in P_T$.}
In this case, $p'_t$ envies $p_t$.
Therefore, the utility of each agent in $P_T$ is exactly \revised{$(n+1)B$}.
Because $f_{p_{t}}(p_r)=B$ for every $p_{t}$,  if we partition $A$ according to neighbors of $p_t$, the resulting partition is a 3-partition.
\end{proof}

Next, we show that {\sf UTA} and {\sf EGA} are NP-complete for several graph classes by reductions from \textsc{Spanning Subgraph Isomorphism}.
Here, we define \textsc{Spanning Subgraph Isomorphism} as follows:
given two graphs $G=(V(G),E(G))$ and $H=(V(H),E(H))$ where $|V(G)|=|V(H)|$, determine whether there is a bijection $g: V(G)\to V(H)$ such that $\{g(u), g(v)\}\in E(H)$ for any $\{u,v\}\in E(G)$. 

\textsc{Spanning Subgraph Isomorphism} is NP-complete even if $G$ is a path \revised{or} a cycle by a reduction from \textsc{Hamiltonian Path} and \textsc{Hamiltonian Cycle}~\cite{GJ1979}.
\revised{Moreover, it is NP-complete if $G$ is a proper interval graph, a trivially perfect graph, a split graph, or a bipartite permutation graph~\cite{Kijima2012}.}
When $G$ is disconnected, it is also NP-complete even if $G$ is a forest and a cluster graph whose components are of size three~\cite{GJ1979,Bod2019}. Here, if $G$ is in some graph class ${\mathcal G}$ in \textsc{Spanning Subgraph Isomorphism}, we call the problem \textsc{Spanning Subgraph Isomorphism} of ${\mathcal G}$.
Then we give the following theorems.
\begin{theorem}\label{spanning:NP}
If \textsc{Spanning Subgraph Isomorphism} of ${\mathcal G}$ is NP-complete, then  {\sf UTA} on ${\mathcal G}$ is NP-hard even if the preferences are binary and symmetric.
\end{theorem}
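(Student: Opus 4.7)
The plan is to reduce \textsc{Spanning Subgraph Isomorphism} of $\mathcal{G}$ to the decision version of {\sf MWA} (``does the given instance admit an arrangement of social welfare at least $k$?''). Given an input $(G,H)$ with pattern $G\in \mathcal{G}$, host $H$, and $|V(G)|=|V(H)|=n$, I would set the seat graph to be $G$ itself (so the seat graph lies in $\mathcal{G}$), the agent set to $\agents=V(H)$, and define binary symmetric preferences by $f_p(q)=1$ if $\{p,q\}\in E(H)$ and $f_p(q)=0$ otherwise. The construction is clearly polynomial and yields the restriction required by the theorem.

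The next step is a short double-counting of the social welfare. For any arrangement $\pi:\agents\to V(G)$, expanding Definition~\ref{def:utility} gives
\[
U_p(\pi)=\bigl|\{\,q\in \agents\setminus\{p\} : \{\pi(p),\pi(q)\}\in E(G) \text{ and } \{p,q\}\in E(H)\,\}\bigr|,
\]
so summing over $p$ and noting that each qualifying unordered pair $\{p,q\}$ is counted twice yields
\[
\sw(\pi)\;=\;2\cdot\bigl|\{\{p,q\}\in E(H) : \{\pi(p),\pi(q)\}\in E(G)\}\bigr|\;\le\;2|E(G)|,
\]
with equality iff every edge $\{u,v\}\in E(G)$ has its endpoints mapped back to an edge of $H$, i.e., iff the bijection $g:=\pi^{-1}:V(G)\to V(H)$ satisfies $\{g(u),g(v)\}\in E(H)$ for every $\{u,v\}\in E(G)$.

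This last condition is exactly the definition of a spanning subgraph isomorphism from $G$ into $H$. Hence $(G,H)$ is a \yes-instance of \textsc{Spanning Subgraph Isomorphism} iff the constructed {\sf MWA} instance has an arrangement of social welfare at least $2|E(G)|$. Since the seat graph lies in $\mathcal{G}$ and the preferences are binary and symmetric, NP-hardness of {\sf MWA} on $\mathcal{G}$ under these restrictions follows from the assumed NP-hardness of \textsc{Spanning Subgraph Isomorphism} of $\mathcal{G}$.

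The only delicate point is a bookkeeping one: the bijection in the \textsc{Spanning Subgraph Isomorphism} definition runs from the pattern $G$ to the host $H$, while an arrangement runs from agents $V(H)$ to seats $V(G)$, so one must take the inverse to line up the two; and the threshold $2|E(G)|$ (rather than $|E(G)|$) must reflect the double-counting above. After those two observations, no further machinery is needed.
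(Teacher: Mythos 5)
Your proposal is correct and follows essentially the same reduction as the paper: seat graph $G$, agents $V(H)$, binary symmetric preferences encoding $E(H)$, and the threshold $2|E(G)|$ characterizing exactly the spanning subgraph isomorphisms (via $g=\pi^{-1}$). The only cosmetic difference is that you establish the equivalence by a single double-counting identity with an ``equality iff'' clause, whereas the paper argues the forward direction via $U_p(\pi)=d_G(\pi(p))$ and the converse by contradiction; the content is the same.
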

\begin{proof}
Given an instance of \textsc{Spanning Subgraph Isomorphism} $(G,H)$, we construct an instance of {\sf UTA} as follows.
Let $\agents=V(H)$ be the set of agents and $G$  be the seat graph.
Then we set the preferences of agents as follows: 
\begin{align*}
\begin{cases}
f_p(q)=f_q(p)=1& \mbox{if $\{p,q\}\in E(H)$}\\
f_p(q)=f_q(p)=0 &\mbox{otherwise.}
\end{cases}
\end{align*}
That is, the preference graph is $H$.
By definition, the preferences of agents are symmetric.
We complete the proof by showing that there is a bijection $g$ such that  $\{g(u), g(v)\}\in E(H)$  for any $\{u,v\}\in E(G)$ if and only if there exists an arrangement $\pi$ with social welfare $2|E(G)|$ in $G$.

Let  $\pi=g^{-1}$.
Since bijection $g$ satisfies that $\{g(u), g(v)\}\in E(H)$  for any $\{u,v\}\in E(G)$  and $f_{g(u)}(g(v))=f_{g(v)}(g(u))=1$ for $\{g(u),g(v)\}\in E(H)$,  there exists $\{u,v\}\in E(G)$ for any $p,q\in \agents (=V(H))$ such that $\pi(p)=u, \pi(q)=v$, and $f_{p}(q)=f_{q}(p)=1$.
Thus, $U_p(\pi)=\sum_{v\in N(\pi(p))} f_p(\pi^{-1}(v))=d_G(\pi(p))$ for any $p\in \agents(=V(H))$.
Finally, we have the social welfare $\sw(\pi)=\sum_{p\in \agents}U_p(\pi)=\sum_{p\in \agents}d_G(\pi(p))=\sum_{v\in V(G)}d_G(v)=2|E(G)|$.

Conversely, we are given $\pi$ with social welfare $2|E(G)|$ in $G$.
Let $g=\pi^{-1}$.
Suppose that there is an edge $\{u,v\}\in E(G)$ such that $\{\pi^{-1}(u),\pi^{-1}(v)\}\notin E(H)$.
Then, it holds that $f_{\pi^{-1}(u)}(\pi^{-1}(v))=f_{\pi^{-1}(v)}(\pi^{-1}(u))=0$ by the definition of the preferences.
Thus, there exists an agent $p=\pi^{-1}(u)\in \agents$ such that $ U_p(\pi)<d_G(\pi(p))$ since it holds that $U_p(\pi)\le d_G(\pi(p))$ for any $p\in \agents$.
This implies that ${\sf sw}(G)<\sum_{p\in \agents}d_G(\pi(p))=2|E(G)|$.
This is a contradiction.
Thus, there exists a bijection $g=\pi^{-1}$ such that  $\{\pi^{-1}(u), \pi^{-1}(v)\}\in E(H)$  for any $\{u,v\}\in E(G)$.
This completes the proof.
\end{proof}

By assuming that the seat graph $G$ is regular, we obtain the following theorem.
\begin{theorem}\label{spanning:NP:maximin}
If \textsc{Spanning Subgraph Isomorphism} of a class $\mathcal{G}$ of regular graphs is NP-complete, then {\sf EGA} on $\mathcal{G}$ is NP-hard even if the preferences are binary and symmetric.
\end{theorem}
 \begin{proof}
We give a reduction from  \textsc{Spanning Subgraph Isomorphism} to {\sf EGA}.
The setting is the same as {\sf UTA}.
Let $G$ be an $r$-regular graph.
Then we show that there is a bijection $g$ such that  $\{g(u), g(v)\}\in E(H)$  for any $\{u,v\}\in E(G)$ if and only if there exists an arrangement $\pi$ such that the least
utility of an agent is $r$ in $G$.
Given a bijection $g$ such that $\{g(u), g(v)\}\in E(H)$ for any $\{u,v\}\in E(G)$, we set $\pi=g^{-1}$.
 Since bijection $g$ satisfies that $\{g(u), g(v)\}\in E(H)$  for any $\{u,v\}\in E(G)$  and $f_{g(u)}(g(v))=f_{g(v)}(g(u))=1$ for $\{g(u),g(v)\}\in E(H)$, there exists \rerevised{$\{u,v\}$} in $E(G)$ for any $p,q\in \agents(=V(H))$ such that $\pi(p)=u, \pi(q)=v$, and $f_{p}(q)=f_{q}(p)=1$.
Thus, $U_p(\pi)=\sum_{v\in N(\pi(p))} f_p(\pi^{-1}(v))=r$ for any $p\in \agents(=V(H))$ since $G$ is $r$-regular.

Conversely, we are given an arrangement $\pi$ such that the least
utility of an agent is $r$ in $G$.
Let $g=\pi^{-1}$.
Suppose that there is an edge $\{u,v\}\in E(G)$ such that $\{\pi^{-1}(u),\pi^{-1}(v)\}\notin E(H)$.
Then, it holds that $f_{\pi^{-1}(u)}(\pi^{-1}(v))=f_{\pi^{-1}(v)}(\pi^{-1}(u))=0$ by the definition of the preferences.
Thus, there exists an agent $p=\pi^{-1}(u)\in \agents$ such that $ U_p(\pi)<r$ since it holds that $U_p(\pi)\le d_G(\pi(p))$ for any $p\in \agents$.
This is a contradiction.
Thus, there exists a bijection $g=\pi^{-1}$ such that  $\{\pi^{-1}(u), \pi^{-1}(v)\}\in E(H)$  for any $\{u,v\}\in E(G)$.
\end{proof}

\begin{corollary}\label{Thm:maxutility:NP-c:path}
{\sf UTA} and {\sf EGA} are  NP-hard on cycles, and cluster graphs whose components are of order three.
Furthermore, {\sf UTA} is NP-hard on paths and linear forests whose components are paths of length three.
These hold even if the preferences are binary and symmetric.
\end{corollary}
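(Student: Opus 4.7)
The plan is to derive each NP-hardness claim by invoking Theorem~\ref{spanning:NP} or Theorem~\ref{spanning:NP:maximin} with an appropriate known NP-complete restriction of \textsc{Spanning Subgraph Isomorphism}. The excerpt already notes that \textsc{Spanning Subgraph Isomorphism} is NP-complete when $G$ is a path or a cycle (via \textsc{Hamiltonian Path} / \textsc{Hamiltonian Cycle}) and when $G$ is a forest or a cluster graph whose components are of size three. Each of these feeds straight into the two meta-reductions, so no new construction is needed; the only subtlety is keeping track of which target graphs are regular, since Theorem~\ref{spanning:NP:maximin} (for {\sf MUA}) requires regularity while Theorem~\ref{spanning:NP} (for {\sf MWA}) does not.

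Concretely, I would handle the cases in the following order. First, for {\sf MWA} on paths and cycles: \textsc{Spanning Subgraph Isomorphism} restricted to these classes is NP-complete, so Theorem~\ref{spanning:NP} immediately yields NP-hardness with binary and symmetric preferences. Second, for {\sf MWA} on cluster graphs whose components are triangles, and on linear forests whose components are paths of length three: these are the NP-complete forest/cluster restrictions of \textsc{Spanning Subgraph Isomorphism} cited above, and again Theorem~\ref{spanning:NP} applies verbatim. Third, for {\sf MUA} on cycles: a cycle is $2$-regular, so Theorem~\ref{spanning:NP:maximin} applies via the cycle version of \textsc{Spanning Subgraph Isomorphism}. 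Fourth, for {\sf MUA} on cluster graphs of triangles: each component $K_3$ is $2$-regular, so the whole seat graph is a $2$-regular graph, and Theorem~\ref{spanning:NP:maximin} again applies.

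The potentially awkward point, and the only real ``obstacle'', is that Theorem~\ref{spanning:NP:maximin} is stated for regular seat graphs, whereas paths and $P_4$-linear forests have vertices of degree~$1$ as well as degree~$2$, so the theorem does not supply {\sf MUA}-hardness for those cases; this matches the fact that the corollary only claims {\sf MWA}-hardness (not {\sf MUA}-hardness) for paths and $P_4$-linear forests, so there is nothing further to show. Putting the four items together gives every assertion of the corollary, with the preferences being binary and symmetric as guaranteed by both underlying theorems. The whole argument amounts to a short citation-by-citation check, and I expect the write-up to be essentially one sentence per case, together with one line observing that cycles and triangle-cluster graphs are $2$-regular.
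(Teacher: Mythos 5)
Your proposal is correct and follows exactly the route the paper intends: the corollary is obtained by plugging the known NP-complete restrictions of \textsc{Spanning Subgraph Isomorphism} (paths, cycles, triangle cluster graphs, $P_4$-linear forests) into Theorem~\ref{spanning:NP} for {\sf MWA}, and the $2$-regular cases (cycles, triangle cluster graphs) into Theorem~\ref{spanning:NP:maximin} for {\sf MUA}. Your observation that the regularity hypothesis is precisely why {\sf MUA} is not claimed for paths and $P_4$-linear forests matches the paper's scoping of the statement.
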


Moreover,  \textsc{Spanning Subgraph Isomorphism} cannot be solved in time $n^{o(n)}$ unless \ETH{} is false \cite{Cygan2017} where $n=|V(G)|=|V(H)|$.
Since the reduction in Theorem \ref{spanning:NP} satisfies $|\agents|=n$, we also have the following result.
\begin{corollary}\label{Thm:maxutility:ETH}
{\sf UTA} cannot be solved in time $n^{o(n)}$ unless \ETH{} is false.
\end{corollary}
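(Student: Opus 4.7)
The plan is to derive this lower bound as a direct consequence of the reduction already constructed in Theorem~\ref{spanning:NP}, combined with the known \ETH{} lower bound for \textsc{Spanning Subgraph Isomorphism} due to Cygan et al.\ (2017). The key observation I would emphasize is that the reduction in Theorem~\ref{spanning:NP} is \emph{parameter-preserving} in the size of the ground set: given an instance $(G,H)$ of \textsc{Spanning Subgraph Isomorphism} on $n$ vertices, the constructed {\sf MWA} instance has exactly $n$ agents and the seat graph has exactly $n$ vertices. The preferences and edge weights are produced in time polynomial in $n$.

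Given this, I would argue by contradiction. Suppose {\sf MWA} could be solved in time $N^{o(N)}$, where $N$ denotes the number of agents. Then feeding an arbitrary instance of \textsc{Spanning Subgraph Isomorphism} on $n$ vertices through the reduction of Theorem~\ref{spanning:NP} and solving the resulting {\sf MWA} instance would yield an algorithm for \textsc{Spanning Subgraph Isomorphism} running in time $n^{o(n)} + n^{O(1)} = n^{o(n)}$, contradicting the result of Cygan et al.\ that no such algorithm exists under \ETH.

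I do not expect any serious obstacle here: the reduction is already in place and is linear in the vertex count, so the only thing to verify is that the running-time function transfers cleanly from $N$ back to $n$. The one point worth flagging explicitly in the write-up is that the correctness direction of Theorem~\ref{spanning:NP} is an ``if and only if'' with the optimum value $2|E(G)|$ serving as the threshold, so deciding the \textsc{Spanning Subgraph Isomorphism} instance reduces to comparing the output of the {\sf MWA} algorithm with $2|E(G)|$. Hence the corollary follows immediately.
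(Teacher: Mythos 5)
Your proposal is correct and follows exactly the route the paper intends: the reduction of Theorem~\ref{spanning:NP} maps an $n$-vertex instance of \textsc{Spanning Subgraph Isomorphism} to an {\sf MWA} instance with $n$ agents and an $n$-vertex seat graph in polynomial time, so the $n^{o(n)}$ \ETH{} lower bound of Cygan et al.\ transfers directly. Your explicit remarks on size preservation and on using the threshold $2|E(G)|$ to decide the original instance are exactly the (implicit) content of the paper's one-line justification.
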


\section{Parameterized Complexity}\label{sec:parameterized_complexity}
{\sf UTA} is NP-hard even on trees  (i.e., treewidth 1), which implies that it admits
no parameterized algorithm by treewidth if $\mathrm{P} \neq \mathrm{NP}$. Thus, we consider designing an algorithm parameterized by a \rerevised{more restricted} parameter: vertex cover number.

\begin{theorem}\label{parameterized_vc_mca}
{\sf UTA} can be solved in time $O(n^\vc \vc!(n-\vc)^{3})$ where $\vc$ is the vertex cover number of the seat graph.
\end{theorem}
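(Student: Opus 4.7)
The plan is to exploit the fact that $I := V(G) \setminus X$ is an independent set for any vertex cover $X$: every edge of $G$ touches $X$, so once the assignment of agents to $X$ is fixed, the welfare contributed by placing any agent $p$ at a vertex $v \in I$ depends only on $\sigma := \pi|_{X}$ and not on the agents placed at other vertices of $I$. This decouples the $n - \vc$ remaining placements and lets one reduce them to a weighted bipartite assignment.

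Concretely, I would first compute a minimum vertex cover $X$ of size $\vc$ in FPT time, and then enumerate all injections $\sigma \colon X \to \agents$. Such an enumeration can be done by first choosing the image $S \subseteq \agents$ (at most $\binom{n}{\vc} \le n^{\vc}$ ways) and then a bijection from $X$ to $S$ ($\vc!$ ways), yielding $O(n^{\vc}\,\vc!)$ candidates.

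For each candidate $\sigma$, the contribution of edges with both endpoints in $X$ is fully determined. For every remaining agent $p$ and every $v \in I$, define
\[
w_\sigma(p, v) \;=\; \sum_{u \in N_G(v)} \bigl(f_p(\sigma(u)) + f_{\sigma(u)}(p)\bigr),
\]
which is exactly the total welfare contributed by the edges incident to $v$ when $p$ is placed at $v$ (note $N_G(v) \subseteq X$ since $v \in I$). Because $I$ is independent, these contributions are additive across $v \in I$, so completing $\sigma$ to a maximum-welfare arrangement reduces to finding a maximum-weight perfect matching between the $n - \vc$ unused agents and $I$ with edge weights $w_\sigma$. This assignment problem is solvable in $O((n-\vc)^{3})$ time by the Hungarian algorithm, and returning the best arrangement across all $\sigma$ yields the claimed running time $O(n^{\vc}\,\vc!\,(n-\vc)^{3})$.

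The only delicate step is verifying the additive decomposition of $\sw(\pi)$ into an $X$--$X$ part that depends only on $\sigma$ and an $X$--$I$ part that equals $\sum_{v \in I} w_\sigma(\pi^{-1}(v), v)$; once this is in place, correctness is immediate and the runtime analysis is routine. The main conceptual observation is just that fixing $\sigma$ turns the remaining optimization into a pure assignment problem, so no intricate combinatorial enumeration over $I$ is needed.
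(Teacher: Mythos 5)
Your proposal is correct and follows essentially the same route as the paper's proof: fix the assignment of agents to a minimum vertex cover ($O(n^{\vc}\,\vc!)$ candidates), observe that the remaining vertices form an independent set so each placement contributes the weight $\sum_{u \in N_G(v)}(f_p(\sigma(u)) + f_{\sigma(u)}(p))$ independently, and solve the residual problem as a maximum-weight perfect matching in $O((n-\vc)^3)$ time. The decomposition, the edge weights, and the runtime analysis all coincide with the paper's argument.
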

\begin{proof}
Given an instance $(G, \agents, \preferencefamily_\agents)$ of {\sf UTA}, we first compute a minimum vertex cover $S$ of size $\vc$ in time $O(1.2738^\vc + \vc n)$~\cite{Chen2010}.
Then we guess $\vc$ agents that are assigned to vertices in $S$.
Let $\agents'$ be the set of agents assigned to $S$.
Next, we guess all arrangements that assign $\agents'$ to $S$.
The number of candidates of arrangements is $O(n^\vc \vc!)$.

For each candidate, we consider how to assign the rest of the agents in $\agents \setminus \agents'$ to $V\setminus S$.
Since $V\setminus S$ is an independent set, we can compute the utility of an agent $p\in \agents \setminus \agents'$ when $p$ is assigned to $v\in V\setminus S$.
Note that $p$ does not affect the utility of other agents in $\agents \setminus \agents'$.
Moreover, we can also compute the increase of the utilities of neighbors of $p$  when $p$ is assigned to $v\in V\setminus S$.
Then we observe that the increase of the social welfare when $p$ is assigned to $v\in V\setminus S$ is the utility of \rerevised{the} agent $p$ \rerevised{plus the sum of the preferences from neighbors of $p$ to $p$, that is, $\sum_{u\in N_G(v)} {f_p(\pi^{-1}(u))} + \sum_{u\in N_G(v)}{f_{\pi^{-1}(u)}(p)}$.}
Thus, by computing a maximum weight perfect matching on a complete bipartite graph $(\agents\setminus \agents', V\setminus S; E')$ with edge weight $w_{pv}=\sum_{u\in N_G(v)} ({f_p(\pi^{-1}(u))} + {f_{\pi^{-1}(u)}(p)})$ for every candidate, we can obtain a maximum arrangement in $G$.

Since we can compute a maximum weight perfect matching in time $O((n-\vc)^{3})$~\cite{Gabow17}, the total running time is $O(n^\vc \vc!(n-\vc)^{3})$.
\end{proof}
By Proposition \ref{prop:stability:Massand}, we obtain the following corollary.
\begin{corollary}\label{parameterized_vc_symmetric_sta}
Symmetric {\sf STA} can be solved in time $O(n^\vc \vc!(n-\vc)^{3})$ where $\vc$ is the vertex cover number of the seat graph.
\end{corollary}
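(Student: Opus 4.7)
The plan is to reduce symmetric \textsf{STA} to \textsf{MWA} via the existence result from Proposition~\ref{prop:stability:Massand}, and then invoke Theorem~\ref{parameterized_vc_mca} as a black box. Concretely, Proposition~\ref{prop:stability:Massand} tells us that under symmetric preferences a social-welfare-maximum arrangement is always stable (this is the standard potential function argument: in the symmetric case $\sw(\pi)/2 = \sum_{\{u,v\}\in E(G)} f_{\pi^{-1}(u)}(\pi^{-1}(v))$ is a potential for the improving-swap dynamics, so any $(p,q)$-swap strictly increasing both $U_p$ and $U_q$ strictly increases $\sw$). Hence a stable arrangement is guaranteed to exist, and one way to produce one is to compute any maximum-welfare arrangement.

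The algorithm is therefore: on input $(G, \agents, \preferencefamily_\agents)$ of symmetric \textsf{STA}, run the algorithm of Theorem~\ref{parameterized_vc_mca} to compute an arrangement $\pi^{*}$ maximizing $\sw$, and return \yes{} together with $\pi^{*}$. By the observation above, $\pi^{*}$ is stable, so this procedure is correct. The running time is exactly the running time of Theorem~\ref{parameterized_vc_mca}, namely $O(n^{\vc}\,\vc!\,(n-\vc)^{3})$.

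There is essentially no obstacle: the two ingredients (existence of a stable maximum under symmetry, and the vertex-cover parameterized algorithm for \textsf{MWA}) have already been established, and the corollary only needs to observe that the former lets the latter decide the existence question trivially and in fact produce a certificate. The only very mild care one should take is to note that the decision version of \textsf{STA} under symmetric preferences is always a \yes{}-instance, so the algorithm never needs to output \no{}; the content of the corollary is therefore that a stable arrangement can actually be \emph{constructed} within the stated time bound.
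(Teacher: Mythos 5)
Your proposal is correct and matches the paper's reasoning exactly: the corollary follows by running the \textsf{MWA} algorithm of Theorem~\ref{parameterized_vc_mca} and observing, via Proposition~\ref{prop:stability:Massand}, that the resulting maximum arrangement is stable under symmetric preferences. Your additional remark that every symmetric instance is a \yes{}-instance is a correct and sensible clarification, but the argument is the same as the paper's.
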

Then we give \rerevised{a} tight lower bound for {\sf UTA} parameterized by vertex cover number.
\begin{theorem}\label{thm:Whard:UTA}
{\sf UTA} is W[1]-hard parameterized by the vertex cover number ${\vc}$ of the seat graph even if the preferences are binary.
Furthermore, there is no $f({\vc})n^{o({\vc})}$-time algorithm unless ETH fails where $f$ is some computable function.
\end{theorem}
\begin{proof}
We give a parameterized reduction from \textsc{$k$-Clique}: given a graph $G=(V,E)$ and an integer $k$,  determine whether there exists a clique of size $k$ in $G$.
The problem is  W[1]-complete parameterized by $k$ and admits no $f(k)n^{o(k)}$-time algorithm unless ETH fails~\cite{Downey1995,CHKX2004}.

Given an instance $(G=(V,E), k)$ of \textsc{$k$-Clique}, we construct \rerevised{a} seat graph $G'$ that consists of a clique $\{w_1,w_2, \ldots, w_k\}$ of size $k$ and $n-k$ isolated vertices $w_{k+1}, \ldots, w_n$.
Clearly, the size  $\gamma$ of a minimum vertex cover of $G'$ is $k-1$.
Let $\agents=V$.
Then we set the preferences of any pair of agents $u,v\in \agents$ by $f_u(v)=f_v(u)=1$ if $\{u,v\}\in E$, and otherwise $f_u(v)=f_v(u)=0$. Note that the preferences are binary.
 
Finally, we show that \textsc{$k$-Clique} is a yes-instance if and only if there exists an arrangement $\pi$ with social welfare $k(k-1)$ in $G'$.
Given an instance $(G,k)$ of  \textsc{$k$-Clique}, we give indices to each vertex $v_1,v_2, \ldots, v_n$ arbitrarily.  
Given a $k$-clique, we denote it by $C=\{v_1,v_2,\ldots,v_k\}$ without loss of generality.
Then we set $\pi(v_i)=w_i$ for any $i\in \{1,\ldots, n\}$.
Since $\{u,v\}\in E$ for any pair of $u,v\in C$, we have $U_{v_i}(\pi)=k-1$ for $i\in \{1,\ldots, k\}$.
For each $i\in \{k+1,\ldots, n\}$, $w_i$ is an isolated vertex, and hence $U_{v_i}(\pi)=0$.
Therefore, $\sw(\pi)=k(k-1)+0=k(k-1)$.

For the reverse direction, we are given an arrangement $\pi$ with social welfare $k(k-1)$.
Since $w_{k+1},\ldots, w_n$ are isolated vertices, $U_{\pi^{-1}(w_i)}(\pi)=0$ for $i\in \{k+1,\ldots, n\}$.
Moreover, because the preferences are binary, $U_{\pi^{-1}(w_i)}(\pi)\le k-1$ for $i\in \{1,\ldots, k\}$.
Thus, any agent $p=\pi^{-1}(w_i)$ for $i\in \{1,\ldots, k\}$ satisfies that $U_{\pi^{-1}(w_i)}(\pi)=k-1$ in order to achieve $\sw(\pi)=k(k-1)$.
This implies that $\{\pi^{-1}(w_i), \pi^{-1}(w_j)\}\in E$ for any pair of $w_i, w_j$ where $i, j\in \{1,\ldots, k\}$.
Therefore, $\{ \pi(w_1),\ldots,\pi(w_k) \}$ is a clique of size $k$.
\end{proof}
%
For {\sf EGA}, we show that it is weakly NP-hard even on a graph of vertex cover number 2, which again implies that it does not admit any parameterized algorithms by vertex cover number unless P$=$NP.
We give a reduction from \textsc{Partition}: given a finite set of integers $A=\{a_1,a_2,\ldots, a_{n}\}$ and $W=\sum_{i=1}^{n} a_i$, determine whether there is partition $(A_1, A_2)$ of $A$ where $A_1\cup A_2=A$ and  $\sum_{a\in A_1}a = \sum_{a\in A_2}a=W/2$.
The problem is weakly NP-complete \cite{GJ1979}.
\begin{theorem}\label{thm:EGA:vc2}
{\sf EGA} is weakly NP-hard even on a graph with $\vc=2$.
\end{theorem}
\begin{proof}
We are given a set of integers $A=\{a_1,a_2,\ldots, a_{n}\}$.
We define \revised{two sets} of agents  ${\bf A}=\{p_{a_1}, \ldots, p_{a_n}\}$ and ${\bf C}=\{c_1, c_2\}$.
Each agent in ${\bf A}$ corresponds to an element in $A$.
For $p_{a_i}\in {\bf A}$, we define $f_{p_{a_i}}(q)=W/2$ if $q\in {\bf C}$, and otherwise $f_{p_{a_i}}(q)=0$. \revised{Moreover, for $c\in {\bf C}$, we define $f_{c}(p_{a_i})=a_i$ if $p_{a_i}\in {\bf A}$, and otherwise $f_{c}(p_{a_i})=0$.}
Finally, we define the seat graph $G$ as a graph consisting of $S_1$ and $S_2$, where $S_i$ is a star of size $n/2+1$. Note that the vertex cover number of $G$ is 2.

In the following, we show that there is a partition $(A_1, A_2)$ where $\sum_{a\in A_1}a = \sum_{a\in A_2}a=W/2$ if and only if there is an arrangement such that the least utility is at least $W/2$ in $G$.
Given a partition $(A_1, A_2)$, let ${\bf A}_1$ and ${\bf A}_2$ be the corresponding sets of agents in ${\bf A}$. We assign agents in ${\bf A}_i$ to leaves of $S_i$ and $c_i$ to the center of $S_i$ for $i\in \{1,2\}$. 
In the arrangement, each utility is $W/2$.

Conversely, we are given an arrangement $\pi$ such that the least utility is at least $W/2$.
If $p\in {\bf A}$ is assigned to the center of a star, at least one agent in ${\bf A}$ is adjacent to only $p$. Then its utility is $0$.
Thus, $c_i\in {\bf C}$ must be assigned to the center of $S_i$.
By the definition of the preferences, the utilities of $c_1$ and $c_2$ are exactly $W/2$.
Thus, two sets of agents in the leaves of $S_1$ and $S_2$ correspond to $A_1$ and $A_2$.
\end{proof}

Similarly, we show that (symmetric) {\sf EFA} is weakly NP-hard even on a graph with vertex cover number 2.
The reduction is from \textsc{Partition} and the reduced graph is the same as the one in the proof of Theorem \ref{thm:EGA:vc2}.
The preferences are defined as $f_{p}(q)=f_{q}(p)=a_i$ if $p=p_{a_i}\in {\bf A}$ and $q\in {\bf C}$, and otherwise $f_p(q)=f_q(p)=0$.
\begin{theorem}\label{thm:EFA:vc2}
{\sf EFA} is weakly NP-hard even if the preferences are symmetric and the vertex cover number of the seat graph is 2.
\end{theorem}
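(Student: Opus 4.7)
The plan is to verify both directions of the reduction sketched in the statement. The preferences defined there are visibly symmetric, and the two star centers form a vertex cover of size $2$ of the seat graph, so the stated restrictions are automatic; what remains is to prove the equivalence between \textsc{Partition} instances and envy-free arrangements. Throughout, I assume without loss of generality that each $a_i>0$ (else drop zeros) and that $n\ge 6$ (small instances are trivially decidable), so that each $S_i$ has at least three leaves.

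First I would handle the easy direction. Given a partition $(A_1,A_2)$ with $\sum_{a\in A_i}a=W/2$, place $c_i$ at the center of $S_i$ and the $n/2$ agents corresponding to $A_i$ at the leaves of $S_i$. Then $U_{c_i}(\pi)=\sum_{a\in A_i}a=W/2$ and $U_{p_{a_j}}(\pi)=a_j$ for each $p_{a_j}\in {\bf A}_i$. I would then check that every possible swap is non-improving: a swap $c_i\leftrightarrow c_{3-i}$ preserves each center's utility at $W/2$; a swap $c_i\leftrightarrow p_{a_j}$ with $p_{a_j}$ at a leaf of $S_i$ gives $c_i$ utility $a_j\le W/2$ and $p_{a_j}$ utility $a_j$; a swap $c_i\leftrightarrow p_{a_j}$ with $p_{a_j}$ at a leaf of $S_{3-i}$ gives $c_i$ utility $0$ and $p_{a_j}$ utility $0$; and any swap between two $p_{a_j}$'s preserves their utility as $a_j$. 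Hence $\pi$ is envy-free.

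For the converse, the key step is to show that in any envy-free arrangement $\pi$, the agents $c_1,c_2$ occupy the two centers. Suppose for contradiction that some $c_i$, say $c_1$, sits at a leaf of a star $S$. I would split on the identity of the occupant of the center of $S$. If the center holds $c_2$, then $U_{c_1}(\pi)=0$, whereas the $(c_1,c_2)$-swap places $c_1$ at the center whose leaves then contain $c_2$ together with $n/2-1\ge 2$ agents from ${\bf A}$, so $U_{c_1}$ becomes $\sum a_{k_\ell}>0$, a strict improvement, contradicting envy-freeness. If instead the center holds some $p_{a_j}\in {\bf A}$, then $U_{c_1}(\pi)=a_j$, while the $(c_1,p_{a_j})$-swap places $c_1$ at the center with leaves containing $p_{a_j}$, at most one $c_2$, and at least $n/2-2\ge 1$ further agents $p_{a_{k_\ell}}$ with $a_{k_\ell}>0$, so $U_{c_1}$ becomes at least $a_j+\sum a_{k_\ell}>a_j$, again a contradiction. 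Hence $c_1,c_2$ must lie at the two centers. This case analysis is where I expect the only real subtlety, so I would state it carefully.

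Once $c_i$ is fixed at the center of $S_i$, the set ${\bf A}_i$ of agents placed at the leaves of $S_i$ induces a partition of $A$ into $(A_1,A_2)$ with $|A_i|=n/2$, and $U_{c_i}(\pi)=\sum_{a\in A_i}a$. Since the $(c_1,c_2)$-swap would give $c_i$ the utility $\sum_{a\in A_{3-i}}a$, envy-freeness forces $\sum_{a\in A_{3-i}}a\le\sum_{a\in A_i}a$ for both $i$, hence $\sum_{a\in A_1}a=\sum_{a\in A_2}a=W/2$, producing the desired partition. Since the reduction runs in polynomial time in the unary encoding of the integers and \textsc{Partition} is weakly NP-hard, this establishes the theorem.
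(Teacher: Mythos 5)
Your proof is correct and fills in exactly the argument the paper intends: it uses the paper's stated construction (the two stars from Theorem~\ref{thm:MUA:vc2} with symmetric preferences $a_i$ between $p_{a_i}$ and the center agents) and verifies both directions in the natural way, the key step being that envy-freeness forces $c_1,c_2$ onto the two star centers, after which the $(c_1,c_2)$-swap condition equalizes the two sums. The only caveat, inherited from the paper's own construction, is that each star has exactly $n/2$ leaves, so the reduction is really from the equal-cardinality variant of \textsc{Partition} (which is still weakly NP-hard, so the theorem stands).
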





\section{Parameterized Complexity of Local Search}\label{sec:localsearch}
As mentioned in Section \ref{subsec:motivation}, finding a stable solution under symmetric preferences by swapping two agents iteratively is PLS-complete.
In this section, we investigate the parameterized complexity of local search of \textsc{Stable Arrangement} by considering {\sf Local $k$-STA}, which determines whether a stable arrangement can be obtained from any given arrangement by $k$ swaps.

Given a set $\mathbf{P}$ of agents  with preferences $\preferencefamily_\agents$, a graph $G$ with $|V (G)| = |\mathbf{P}|$, an
arrangement $\pi : \mathbf{P} \rightarrow V (G)$, and an integer $k$, \textsc{\problemname}  asks whether  there is a stable arrangement $\pi'$
that can be obtained from $\pi$ \rerevised{within} $k$ swaps.
\begin{theorem}\label{thm:w1:hard}
\textsc{\problemname} is W[1]-hard parameterized by $k$ even if the preferences are symmetric.
\end{theorem}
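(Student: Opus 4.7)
The plan is to reduce from \textsc{Multicolored $k$-Clique}, which is W[1]-hard parameterized by $k$. Given an instance with color classes $V_{1},\dots,V_{k}$, I construct an instance of \textsc{\problemname} with symmetric preferences and the same parameter $k$ in which the initial arrangement $\pi_{0}$ has a tightly controlled set of blocking pairs, and in which the only way to reach a stable arrangement within swap distance $k$ is to encode a choice of one vertex per color class whose union induces a clique in $G$.

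The construction consists of $k$ \emph{selector gadgets}, one per color class, connected through a central clique of $k$ \emph{focal seats} $\sigma_{1},\dots,\sigma_{k}$ in the seat graph $H$. Each selector gadget $i$ has the focal seat $\sigma_{i}$ initially occupied by an \emph{incumbent} agent $b_{i}$, a local trigger seat $\tau_{i}$ adjacent to $\sigma_{i}$ occupied by a \emph{trigger} agent $t_{i}$, and a row of $|V_{i}|$ isolated holding seats $h^{i}_{v}$ initially occupied by the \emph{vertex agents} $a^{i}_{v}$. I pad each $V_{i}$ with dummy vertices so that $|V_{i}|>2k$, ensuring that no arrangement within swap distance $k$ of $\pi_{0}$ can displace the entire holding row of any color class. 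The symmetric preferences are tuned around a large weight $N\gg k$ so that (a) each pair $(b_{i},a^{i}_{v})$ is a blocking pair of $\pi_{0}$ and no other pair is; (b) moving $a^{i}_{v}$ into $\sigma_{i}$ earns a strictly positive ``trigger bonus'' from $t_{i}$; and (c) two vertex agents $a^{i}_{u},a^{j}_{v}$ at adjacent focal seats contribute $+1$ if $uv\in E(G)$ and $-N$ otherwise.

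Correctness splits into two directions. For the forward direction, a multicolored $k$-clique $\{v_{1},\dots,v_{k}\}$ yields the $k$ pairwise disjoint transpositions $(b_{i},a^{i}_{v_{i}})$, producing a stable arrangement in which every focal seat holds a vertex agent with maximal utility and every incumbent is content at an isolated holding seat. For the reverse direction, stability of $\pi'$ forces each of the $|V_{i}|$ blocking pairs $(b_{i},a^{i}_{v})$ to be destroyed; the padding on $|V_{i}|$ guarantees that the only affordable way is to displace $b_{i}$ itself, and the support bound $|\mathrm{supp}(\pi_{0}^{-1}\pi')|\le 2k$ forces the displacements to form exactly $k$ disjoint transpositions. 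A per-gadget case analysis (ruling out $b$-permutations, incumbent/trigger swaps, and cross-color pairings via the large penalty $N$) then shows that each transposition must be $(b_{i},a^{i}_{v_{i}})$ for some $v_{i}\in V_{i}$, and the remaining stability checks force $\{v_{1},\dots,v_{k}\}$ to be a clique in $G$.

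The main obstacle will be this per-gadget case analysis: I must verify that no cheaper or ``cheating'' sequence of $\le k$ swaps --- such as permuting incumbents among themselves, swapping an incumbent with its trigger, or pairing an incumbent with a wrong-color vertex agent --- can reach a stable arrangement, and that symmetry does not accidentally create unintended blocking pairs after partial swaps (for instance, between a trigger that has wandered onto a focal seat and a vertex agent still in a holding seat). These issues will be controlled by taking $N$ much larger than $k$, by confining each trigger's positive preferences to vertex agents of its own color class, and by adding small negative cross-preferences between triggers and stranded incumbents that penalise every configuration except the intended one.
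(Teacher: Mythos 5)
Your overall architecture (an initial arrangement whose blocking pairs can only be killed by $k$ disjoint swaps that pull $k$ designated agents out of a ``hot'' clique of seats, with a large penalty $N$ enforcing the combinatorial condition on the agents pulled in) is the same mechanism the paper uses, though you reduce from \textsc{Multicolored Clique} with per-color gadgets while the paper reduces from \textsc{Independent Set} with a single clique plus two stars. However, there is a genuine gap in your reverse direction, at the step ``the remaining stability checks force $\{v_1,\dots,v_k\}$ to be a clique.'' Instability requires a blocking \emph{pair}: if $v_iv_j\notin E(G)$, the agent $a^i_{v_i}$ sitting at $\sigma_i$ indeed has very negative utility and wants to leave, but you must exhibit a partner who \emph{strictly gains} by moving into $\sigma_i$. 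In your construction no such partner is guaranteed: the displaced incumbent $b_i$ was designed to prefer an isolated seat to $\sigma_i$; the trigger $t_i$ is already adjacent to $\sigma_i$ and gains nothing by moving onto it; and a holding-seat agent $a^i_w$ gains only if $w$ happens to be adjacent to all the other selected $v_\ell$, which need not hold. So a non-clique selection can be stable, and the reduction would accept no-instances. The paper avoids exactly this trap by adding a set $\mathbf{Y}$ of $k+1$ ``eager'' agents parked on leaves of a small star, at least one of which is untouched by any $k$ swaps and always strictly profits from moving next to the $\mathbf{C}_2$ agents; that surviving agent is the second member of every blocking pair used in the backward direction. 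You need an analogous always-available witness population of size $k+1$.

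A secondary, smaller issue: your claim that ``the only affordable way to destroy the blocking pairs $(b_i,a^i_v)$ is to displace $b_i$'' is not yet justified, because a blocking pair can also be destroyed from the other side --- e.g.\ by placing an agent at $\tau_i$ or at some $\sigma_j$ that $a^i_v$ dislikes enough to cancel the trigger bonus, so that $a^i_v$ no longer wants to move in. You gesture at this (``small negative cross-preferences \ldots penalise every configuration except the intended one''), but this is precisely where the bulk of the case analysis lives (it occupies Claims~\ref{claim:position:x1}--\ref{claim:pi:Y} in the paper's proof), and it interacts with the witness issue above: the cross-preferences you add to kill cheating swaps must not simultaneously kill the witness's incentive to move in. Until both of these are pinned down, the reduction is not complete.
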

\begin{proof}
	\def\agentsbuf{\agents}
	\renewcommand\agents{\mathbf{P}}
	We give a reduction from the \textsc{Independent Set} problem, which is known to be W[1]-complete \cite{Downey1995}. Let $(H, k)$ be an instance of \textsc{Independent Set} where $H$ is a graph \rerevised{with} $n$ vertices and \rerevised{an integer $k$ is} the parameter, and the question is whether $H$ has an independent set of size $k$. Throughout the following, for convenience, we will assume that $n > k + 2$.
	
	We will construct an instance $\cI = (\agents, \preferencefamily_\agents, G, \arrangement, k)$ of \textsc{\problemname} such that $\cI$ is a \yes{}-instance if and only if $H$ has an independent set of size $k$.
	
	We construct a set  $\agents$ of $n + 3k + 5$ agents, which is partitioned into the following subsets: 
	$\agents = \bfC_1 \cup \bfC_2 \cup \bfV \cup \{\bfx_1\} \cup \bfY \cup \{{\bfx}_2\}$.
	We have that $\card{\bfC_1} = k$, $\card{\bfC_2} = k + 2$, and $\card{\bfY} = k + 1$. Finally, $\bfV = V(H)$ and we may refer to elements of the set $V(H) = \bfV$ both as vertices of $H$ and of agents of $\agents$.
The definition of the preferences $\preferencefamily_\agents$ is given in Table~\ref{tab:preferences}. Note that they are symmetric.
\newcommand\edgecode{
	$\left\lbrace\begin{array}{ll} 
			-\largeval, &\mbox{ if } \agent\agentt\in E(H) \\
			0, &\mbox{ otherwise}
		\end{array}\right.$
}
%
%
\begin{table*}
	\centering
		\caption{The preferences $\preferencefamily_\agents$ given in the proof of Theorem~\ref{thm:w1:hard}. For $\agent, \agentt$ from the corresponding sets, the entry shows $\preference_\agent(\agentt)$.}
	\begin{tabular}{c|c|cccccc}
			& & & & {\large $q \in{}$} & & \\
			 \hline
			& & $\bfC_1$ & $\bfC_2$ & $\bfV$ & $\bfY$ & $\{\bfx_1\}$ & $\{\bfx_2\}$ \\
			\hline
			& $\bfC_1$ & $0$ & $-\largeval$ & $-1$ & $0$ & $1$ & $-1$ \\
			& $\bfC_2$ & $-\largeval$ & $0$ & $1$ & $1$ & $-\largeval$ & $-\largeval$ \\ 
			{\large $p \in{}$} & $\bfV$ & $-1$ & $1$ & \edgecode & $0$ & $-1$ & $0$ \\
			& $\bfY$ & $0$ & $1$ & $0$ & $0$ & $-\largeval$ & $1$ \\
			& $\{\bfx_1\}$ & $1$ & $-\largeval$ & $-1$ & $-\largeval$ & --- & $-\largeval$ \\
			& $\{\bfx_2\}$ & $-1$ & $-\largeval$ & $0$ & $1$ & $-\largeval$ & ---
	\end{tabular}
	\label{tab:preferences}
\end{table*}
	
	We construct a graph $G$ as follows. \rerevised{The graph} $G$ consists of one clique on $2k + 2$ vertices whose vertices are $C_1 \cup C_2$ with $\card{C_1} = k$ and $\card{C_2} = k+2$, one star on $n + 1$ vertices whose center is $x_1$ and whose leaves are called $V_H$, one star on $k + 2$ vertices whose center is $x_2$ and whose leaves are called $Y$.
	
	We now let $\arrangement$ be a map that maps bijectively $\bfC_1$ to $C_1$, $\bfC_2$ to $C_2$, $\bfV$ to $V_H$, $\bfY$ to $Y$, \revised{$\{\bfx_1\}$ to $x_1$ and $\{\bfx_2\}$ to $x_2$}. This finishes the construction of our instance $\cI = (\agents, \preferencefamily_\agents, G, \arrangement, k)$. See the instance in Figure~\ref{fig:swap-hardness}.


Before moving to the detailed proof, we give a high-level description of the reduction. 
In the reduction, the main part of the seat graph is the clique associated with $\bfC_1\cup \bfC_2$ and the star associated with $\{\bfx_1\}\cup \bfV$. Since agents in $\bfC_1$ and agents in $\bfC_2$ dislike each other, in order to make $\pi$ stable, we have to move them so that they are not adjacent. 
Since $|\bfC_2|=k+2$, 
the only way of possibly obtaining a stable arrangement within $k$ swaps is by moving agents in $\bfC_1$.
Suppose that there exists an independent set $S$ of size $k$ in $G$. Let $\bfS \subseteq \bfV$ be the set of agents corresponding to $S$. Then the arrangement obtained from $\pi$ after $k$ swaps of  $\bfC_1$ and $\bfS$ is stable because agents in $\bfx_1$ and $\bfC_1$ like each other, and agents in $\bfS$ and agents in $\bfC_2$ also like each other. Moreover, since $S$ is an independent set, agents in $\bfS$ are indifferent. 
Conversely, if there is no independent set of size $k$, we cannot swap agents in $\bfC_1$ and agents in $\bfV$ because agents in $\bfV$ dislike each other. Eventually, we can conclude that a stable arrangement cannot be obtained after $k$ swaps. Here, agents in $\{\bfx_2\}\cup \bfY$ play essential roles by joining a blocking pair so that any arrangement after $k$ swaps is not stable. In what follows, we give the detailed proof.


%
\newcommand\instance{(\agents, \preferencefamily_\agents, G, \arrangement, k)}
\begin{figure}
	\centering
	\includegraphics[height=.30\textheight]{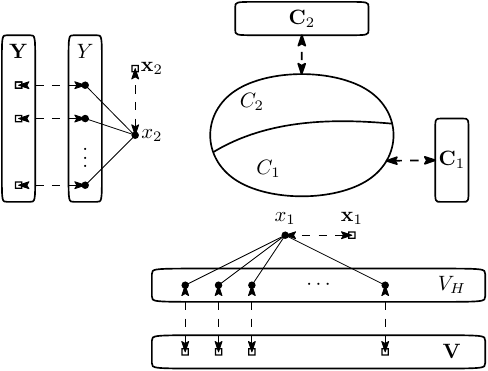}
	\caption{Illustration of the instance of \textsc{\problemname} constructed in the reduction of Theorem~\ref{thm:w1:hard}.}
	\label{fig:swap-hardness}
\end{figure}	
	
\begin{nestedclaim}\label{claim:cor:forward}
	If $H$ has an independent set $S$ of size $k$, then $\instance$ is a \yes{}-instance.
\end{nestedclaim}
\begin{claimproof}
Let $\arrangement'$ be the arrangement obtained from $\arrangement$ by swapping the assignments of the agents in $\bfS$ with the assignments of the agents in $\bfC_1$.
	Since $\card{S} = k$, it is clear that $\arrangement'$ can be reached from $\arrangement$ by $k$ swaps. We show that $\arrangement'$ is a stable arrangement. Suppose not, then there is a blocking pair $(\agent, \agentt)$ for $\arrangement'$, i.e.,\ if $\arrangement''$ is the $(\agent, \agentt)$-swap arrangement of $\arrangement'$, then $\utility_\agent(\arrangement'') > \utility_\agentt(\arrangement')$ and $\utility_\agentt(\arrangement'') > \utility_\agentt(\arrangement')$.
	
	We conduct a case analysis on which sets of the above described partition of $\agents$ contain $\agent$ and $\agentt$. 
	Throughout the following, we denote by $\arrangement''$ a $(\agent, \agentt)$-swap arrangement of $\arrangement'$ where $\agent$ and $\agentt$ are defined depending on the below cases.
\begin{description}
	\item[Case 1 ($\agent \in \bfC_1$).] Note that $\arrangement'(\agent) \in V_H$: in $\arrangement$, the agents in $\bfC_1$ were mapped to $C_1$ and $\arrangement'$ is obtained from $\arrangement$ by swapping the assignment of the agents in $\bfC_1$ with the agents in $\bfS \subseteq \bfV$ which are mapped to $V_H$. This implies that $\utility_\agent(\arrangement') = 1$. 
	As $\bfx_1$ is the only agent to which $\agent$ has a positive preference, this is the maximum utility of $\agent$ among all arrangements.
	%
	\item[Case 2 ($\agent \in \bfC_2$).] Note that $\arrangement'(\agent) \in C_2$. Furthermore, $\arrangement'(\agent)$ has $k$ neighbors to which an element of $\bfV$ is mapped, and $k+1$ neighbors to which an element of $\bfC_2$ is mapped. Hence, $\utility_\agent(\arrangement') = k$. We have that for any $\agentt \in \agents \setminus (\bfV \cup \bfY \cup \{\agent \})$, $\preference_\agent(\agentt) \le 0$, 
	and for $\agentt \in \bfV \cup \bfY$, $\preference_\agent(\agentt) = 1$,	
	so in order to increase the utility of $\agent$ in any swap arrangement of $\arrangement'$, we would have to map $\agent$ to a vertex that has more than $k + 1$ neighbors to which an element of $\bfV \cup \bfY$ is mapped.
	
	There are two such possibilities, namely either $\agentt = \bfx_1$ or $\agentt = \bfx_2$. Suppose $\agentt = \bfx_1$. Note that $\arrangement'(\bfx_1) = x_1$, and $x_1$ has $k$ neighbors to which an agent from $\bfC_1$ is mapped, and $n - k$ neighbors to which an agent from $\bfV$ is mapped. Hence, $\utility_{\agentt}(\arrangement') = k - (n - k) = 2k - n$. As $\arrangement''(q) \in C_2$, $\utility_q(\arrangement'') = -(k+1)\largeval-k= -(k+2)\largeval$. 
	As $\utility_\agentt(\arrangement') = 2k - n > -(k+2)\largeval = \utility_\agentt(\arrangement'')$,
	$(\agent, \agentt)$ is not a blocking pair for $\arrangement'$.
	If $\agentt = \bfx_2$, then we observe that $\utility_\agentt(\arrangement') = k + 1 > - (k+1)\largeval = \utility_\agentt(\arrangement'')$,
	so $(\agent, \agentt)$ is not a blocking pair for $\arrangement'$ either.
	\item[Case 3.1 ($\agent \in \bfV$, \rerevised{$\agent \in \bfS$}).] In this case, $\arrangement'(\agent) \in C_1$, and $\utility_\agent(\arrangement') = k + 2$: $\arrangement'(\agent)$ has $k + 2$ neighbors to which agents in $\bfC_2$ are mapped, and to the remaining neighbors of $\arrangement'(\agent)$, \rerevised{agents in $\bfS$} are mapped. Since $S$ is an independent set, the latter contributes with a total value of $0$ to $\utility_\agent(\arrangement')$. We have that $\arrangement'$ in fact achieves the maximum utility for $\agent$, among all arrangements: The only agents to which $\agent$ has a positive preference are the ones in $\bfC_2$, and in $\arrangement'$, all agents in $\bfC_2$ are mapped to neighbors of $\arrangement'(\agent)$.
	\item[Case 3.2 ($\agent \in \bfV$, $\agent \notin \rerevised{\bfS}$).] 
	We observe that if $\agentt \in \bfC_1 \cup (\bfV \setminus \rerevised{\bfS})$, then the utility of $\agent$ compared to the resulting swap arrangement does not change, since in this case, $\arrangement'(\agent) \in V_H \ni \arrangement'(\agentt)$. The remaining ones are as follows:
	\begin{itemize}
		\item If $\agentt \in \rerevised{\bfS}$, then $\utility_\agentt(\arrangement') = k + 2 > - 1 = \utility_\agentt(\arrangement'')$.
		\item If $\agentt \in {\bfC}_2$, then $\utility_\agentt(\arrangement') = k > -\largeval = \utility_\agentt(\arrangement'')$.
		\item If $\agentt \in \bfY$, then $\utility_\agentt(\arrangement') = 1 > -\largeval = \utility_\agentt(\arrangement'')$.
		\item If $\agentt = \bfx_1$, then $\utility_\agent(\arrangement') = - 1 > -k\ge \utility_\agent(\arrangement'')$.
		\item If $\agentt = {\bfx}_2$, then $\utility_\agentt(\arrangement') = k + 1 > - \largeval = \utility_\agentt(\arrangement'')$.
	\end{itemize}
	\item[Case 4 ($\agent \in \bfY$).] We have that $\utility_\agent(\arrangement') = 1$, as $x_2 = \arrangement'^{-1}(\bfx_2)$ is the only neighbor of $\arrangement'(\agent)$. 
	There is only one way to increase the utility of $\agent$ in one swap. It is to map $\agent$ to a vertex to whose neighbors agents of $\bfC_2$ are mapped. This means that we would have to swap $\agent$ with some $\agentt \in \bfC_2 \cup \rerevised{\bfS}$. 
	However, if $q\in \bfC_2$, we would have that $\utility_\agentt(\arrangement')=k>-n=\utility_\agentt(\arrangement'')$ in the $(p,q)$-swap arrangement $\arrangement''$ for $\arrangement'$. Moreover, if $q\in \rerevised{\bfS}$, then $\arrangement'(q)\in C_1$. Then we would have that $\utility_\agentt(\arrangement')=k+2>0=\utility_\agentt(\arrangement'')$ in the $(p,q)$-swap arrangement $\arrangement''$.
	%
	
	\item[Case 5 ($\agent = \bfx_1$).] 
	We observe that $\utility_{\bfx_1}(\arrangement') = 2k-n$. In the $(p,q)$-swap arrangement $\arrangement''$, we have: 
	\begin{itemize}
		\item If $\agentt\in \rerevised{\bfS}$, $\utility_\agent(\arrangement') = 2k-n >-n(k+2)-(k-1) =\utility_\agent(\arrangement'')$.
		\item If $\agentt\in {{\bf V}\setminus \rerevised{\bfS}}$, $\utility_\agentt(\arrangement') = -1>-k-1 \ge \utility_\agentt(\arrangement'')$.
		\item If $\agentt\in {\bfC_1}$, $\utility_\agentt(\arrangement') = 1> 1 - (n-k) = \utility_\agentt(\arrangement'')$.
		\item If $\agentt\in {\bfC_2}$, $\utility_\agent(\arrangement') = 2k-n >-n(k+1)-k=\utility_\agent(\arrangement'')$.
				\item If $\agentt\in {\bf Y}$, $\utility_\agent(\arrangement')=2k-n  > -n=\utility_\agent(\arrangement'')$.
		\item If $\agentt = {\bfx_2}$, $\utility_\agent(\arrangement') =2k-n >-n(k+1)=\utility_\agent(\arrangement'')$.
	\end{itemize}

\item[Case 6 ($\agent =  \bfx_2$).] We observe that  $\utility_{\bfx_2}(\arrangement') = k+1$, and this is the maximum utility that $\bfx_2$ can achieve.
\end{description}
We have shown that there is no blocking pair in $\arrangement'$, so it is indeed a stable arrangement.
\end{claimproof}

We now prove the converse direction of the correctness of this reduction. Suppose $\instance$ is a \yes{}-instance of \textsc{\problemname}, and let $\arrangement'$ be an arrangement obtained from $\arrangement$ in $k$ swaps such that $\arrangement'$ is stable. 
We will show that for $\arrangement'$ to be stable, these $k$ swaps need to swap the agents of $\bfC_1$ with $k$ agents \rerevised{in $\bfS\subseteq \bfV$}, and the agents in $\rerevised{\bfS}$ need to correspond to an independent set in $H$.

First, we immediately observe that since $\card{\bfC_2} = k + 2$, and for each $\agent \in \bfC_2$, $\arrangement(\agent) \in C_2$, at least two agents of $\bfC_2$ remain assigned to a vertex in $C_2$ by $\arrangement'$. For a similar reason, namely $\card{\bfY} = k + 1$, at least one agent in $\bfY$ is still mapped to a vertex in $Y$.
\begin{nestedobservation}\label{obs:at:least:2:C2}\label{obs:at:least:one:Y}
	There are two distinct $\agent_1, \agent_2 \in \bfC_2$ such that $\arrangement'(\agent_1) \in C_2$ and $\arrangement'(\agent_2) \in C_2$. Furthermore, there is at least one agent $\agent \in \bfY$ such that $\arrangement'(\agent) \in Y$.
\end{nestedobservation}

Next, we show that in $\arrangement'$, neither $\bfx_1$ nor $\bfx_2$ can be mapped to a vertex in the clique $C_1 \cup C_2$.

\begin{nestedclaim}\label{claim:position:x1}
	$\arrangement'(\bfx_1) \notin C_1 \cup C_2$ and $\arrangement'(\bfx_2) \notin C_1 \cup C_2$.
\end{nestedclaim}
\begin{claimproof}
	Suppose $\arrangement'(\bfx_1) \in C_1 \cup C_2$. First, we have that $\utility_{\bfx_1}(\arrangement') \le k - 2\largeval$, since $k$ is the maximum utility that $\bfx_1$ can have in any arrangement, and by Observation~\ref{obs:at:least:2:C2}, there are at least two agents from $\bfC_2$ that are mapped to neighbors of $\arrangement'(\bfx_1)$. Again by Observation~\ref{obs:at:least:one:Y}, there is one agent $y \in \bfY$ that is still mapped to a vertex in $Y$. That vertex has only one neighbor (the center of the star to which the vertices in $Y$ are leaves), so $\utility_y(\arrangement') \le 1$.
	Now, if $\arrangement''$ is the $(\bfx_1, y)$-swap arrangement of $\arrangement'$, then $\utility_{\bfx_1}(\arrangement'') \ge - \largeval$ ($\arrangement''(\bfx_1)$ having only one neighbor) and $\utility_y(\arrangement'') \ge 2$: $\bfx_1$ is not mapped to a neighbor of $y$ in $\arrangement''$, and $y$ has non-negative preferences to the remaining agents; by Observation~\ref{obs:at:least:2:C2}, $\arrangement''(y)$ has at least two neighbors to which an agent of $\bfC_2$ is mapped. So, $\utility_{\bfx_1}(\arrangement'') \ge -\largeval > k - 2\largeval \ge \utility_{\bfx_1}(\arrangement')$, as $n > k$, and $\utility_y(\arrangement'') \ge 2 > 1 \ge \utility_y(\arrangement')$,
	hence $(\bfx_1, y)$ is a blocking pair in $\arrangement'$, a contradiction to $\arrangement'$ being stable.
	
	Similarly, if $\arrangement'(\bfx_2) \in C_1 \cup C_2$ and $\arrangement''$ is the $(\bfx_2, y)$-swap arrangement of $\arrangement'$, then we have that $\utility_{\bfx_2}(\arrangement'') \ge -\largeval > k - 2\largeval \ge \utility_{\bfx_2}(\arrangement')$ and $\utility_y(\arrangement'') \ge 2 > 1 \ge \utility_y(\arrangement')$, so $(\bfx_2, y)$ is a blocking pair.
\end{claimproof}

\begin{nestedclaim}\label{claim:position:C1}
	There is no agent $\agent \in \bfC_1$ such that $\arrangement'(\agent) \in C_1 \cup C_2$.
\end{nestedclaim}
\begin{claimproof}
	Suppose there is. Then we have that $\utility_{\agent}(\arrangement') \le - 2\cdot \largeval$, since by Observation~\ref{obs:at:least:2:C2}, $\agent$ has at least two neighbors to which an agent of $\bfC_2$ is mapped, 
	and $\bfx_1$ is the only agent that $\agent$ has a positive preference to, and $\bfx_1$ is not assigned to a neighbor of $\arrangement'(\agent)$ by Claim~\ref{claim:position:x1}.
	
	By Observation~\ref{obs:at:least:one:Y}, there is an agent $y \in \bfY$ with $\arrangement'(y) \in Y$.
	We have that $\utility_y(\arrangement') \le 1$.
	Now, $(\agent, y)$ is a blocking pair of $\arrangement'$: let $\arrangement''$ be the $(\agent, y)$-swap arrangement of $\arrangement'$. Then, $\utility_\agent(\arrangement'') \ge -n$, since $\arrangement''(\agent) \in Y$ which only has one neighbor $x_2$. Furthermore we can observe that $\utility_y(\arrangement'') \ge 2$: $\arrangement''(y)$ has at least two neighbors to which agents of $\bfC_2$ are mapped by Observation~\ref{obs:at:least:2:C2}, $\bfx_1$ is the only agent to which $y$ has a negative preference, and Claim~\ref{claim:position:x1} ensures that $\arrangement''(\bfx_1) = \arrangement'(\bfx_1) \notin C_1 \cup C_2$. 
	Hence,
	$\utility_\agent(\arrangement'') \ge - n> - 2\cdot \largeval \ge \utility_\agent(\arrangement')$, and
	$\utility_y(\arrangement'') \ge 2 > 1 \ge \utility_y(\arrangement')$, \rerevised{a contradiction with $\arrangement'$ being stable.}
\end{claimproof}

Claim~\ref{claim:position:C1} yields the following information about the swaps that were executed to obtain $\arrangement'$ from $\arrangement$.
\begin{nestedclaim}\label{claim:swaps}
	The $k$ swaps executed to obtain $\arrangement'$ from $\arrangement$ are $(\agent_1, \bfc_1)$, $(\agent_2, \bfc_2)$, $\ldots$, $(\agent_k, \bfc_k)$, where
	\begin{enumerate}
		\item $\bfC_1 = \{\bfc_1, \ldots, \bfc_k\}$, and for all $i \in [k]$, $\agent_i \notin \bfC_1 \cup \bfC_2$.\label{claim:swaps:C1}
		\item for all $i, j \in [k]$ with $i \neq j$, $\agent_i \neq \agent_j$.\label{claim:swaps:other}
	\end{enumerate}
\end{nestedclaim}
\begin{claimproof}
	Part~\ref{claim:swaps:C1} is immediate from Claim~\ref{claim:position:C1} and the fact that $\card{\bfC_1} = k$: if there was one swap that did not remove an agent in $\bfC_1$ from $C_1 \cup C_2$, then for at least one agent $\bfc \in \bfC_1$, $\arrangement'(\bfc) \in C_1 \cup C_2$. 
	For Part~\ref{claim:swaps:other}, suppose there are $i, j \in [k]$ with $i \neq j$ such that $\agent_i = \agent_j \eqdef \agent$, and suppose wlog.\ that $i < j$. Then, after swapping \rerevised{$\bfc_i$} and $\agent$, we have that $\agent$ is mapped to a vertex in $C_1$. Then, when swapping $\agent$ and \rerevised{$\bfc_j$}, \rerevised{$\bfc_j$} is still mapped to a vertex in $C_1$. Since by Part~\ref{claim:swaps:C1}, each agent is affected by at most one swap, we have that $\arrangement'(\rerevised{\bfc_j}) \in C_1 \cup C_2$, a contradiction with Claim~\ref{claim:position:C1}.
\end{claimproof}

Now, combining Claim~\ref{claim:swaps} with \ref{claim:position:x1} tells us that $\bfx_1$ and $\bfx_2$ remain unaffected by the $k$ swaps that yielded $\arrangement'$.
\begin{nestedobservation}\label{obs:fix:x1:x2}
	$\arrangement'(\bfx_1) = x_1$ and $\arrangement'(\bfx_2) = x_2$.
\end{nestedobservation}

Claim~\ref{claim:swaps} ensures that each agent is affected by at most one swap and that each swap affects one unique agent in $\bfC_1$. Furthermore, it rules out that the agents of $\bfC_1$ are swapped with agents from $\bfC_1 \cup \bfC_2$, and Observation~\ref{obs:fix:x1:x2} rules out that they are swapped with $\bfx_1$ or $\bfx_2$. As our goal is to show that they are swapped with agents from $\bfV$, the only case that remains to be ruled out is when they are swapped with an agent from $\bfY$.
\begin{nestedclaim}\label{claim:pi:Y}
	There is no $i \in [k]$ such that $\agent_i \in \bfY$.
\end{nestedclaim}
\begin{claimproof}
Suppose there is, and let $(\agent_i, \bfc_i)$ be the corresponding swap. Let $\arrangement^*$ be the $(\agent_i, \bfc_i)$-swap arrangement of $\arrangement$ and note that by Claim~\ref{claim:swaps} and Observation~\ref{obs:fix:x1:x2}, $\utility_{\bfc_i}(\arrangement^*) = \utility_{\bfc_i}(\arrangement') = -1$. As $\card{\bfV} > k$, there is at least one agent {$\agentt \in {\bf V}$} that $\arrangement'$ assigns to $V_H$. Again by Observation~\ref{obs:fix:x1:x2}, we have that $\utility_\agentt(\arrangement') = -1$. Let $\arrangement''$ be the $(\bfc_i, \agentt)$-swap arrangement of $\arrangement'$. Then, $\utility_{\bfc_i}(\arrangement'') = 1 > - 1 = \utility_{\bfc_i}(\arrangement')$ and $\utility_\agentt(\arrangement'') = 0 > - 1 = \utility_\agentt(\arrangement')$, so $(\bfc_i, \agentt)$ is a blocking pair for $\arrangement'$, a contradiction with $\arrangement'$ being stable.
\end{claimproof}

%
\begin{nestedclaim}
	If $\arrangement'$ is stable, then $H$ contains an independent set of size $k$.
\end{nestedclaim}
\begin{claimproof}
	The above claims and observations lead us to the conclusion that the agents $\agent_1, \ldots, \agent_k$ (in the notation of Claim~\ref{claim:swaps}) form a size-$k$ subset, say $S$, of $\bfV = V(H)$. We argue that if $\arrangement'$ is stable, then $S$ is indeed an independent set in $H$. Suppose for the sake of a contradiction that there is an edge between $\agent_i$ and $\agent_j$ in $H$ for some $i \neq j$. Then, $\utility_{\agent_i}(\arrangement') \le (k + 2) - \largeval$. Consider again the agent $y \in \bfY$ from Observation~\ref{obs:at:least:one:Y} which is such that $\arrangement'(y) \in Y$. We have that $\utility_y(\arrangement') = 1$ (together with Observation~\ref{obs:fix:x1:x2}). Now let $\arrangement''$ be the $(\agent_i, y)$-swap arrangement of $\arrangement'$. Then, $\utility_{\agent_i}(\arrangement'') = 0 > (k+2) - \largeval = \utility_{\agent_i}(\arrangement')$ (as by our initial assumption, $n > k + 2$) and $\utility_y(\arrangement'') = k + 2 > 1 = \utility_y(\arrangement')$, so $(\agent_i, y)$ is a blocking pair for $\arrangement'$.
\end{claimproof}

This concludes the correctness proof of the reduction. We observe that $\card{\agents} = n + 3k + 5 = \cO(n)$, $\card{\preferencefamily_\agents} = \cO(n^2)$, $\card{V(G)} = \card{\agents} = \cO(n)$, and $\card{E(G)} = \cO(n + k^2)$ ($G$ contains two stars with $n$ and $k+1$ leaves, respectively and a clique on $\cO(k)$ vertices). So, the size of the instance of \textsc{\problemname} is $\cO(n^2)$ and the parameter $k$ remained unchanged, which completes the proof.
\renewcommand\agents{\agentsbuf}
\end{proof}

We observe that the structure of the instance of \textsc{\problemname} is in fact quite restricted and yields the following stronger form of Theorem~\ref{thm:w1:hard}. First, in the preferences of the \textsc{\problemname} instance, we only have $4$ different values of preferences. Second, the seat graph $G$ obtained in the reduction above has a vertex cover of size $2k + 3$: take $2k+1$ vertices from $C_1 \cup C_2$ and the vertices $x_1$ and $x_2$. This means that even including the vertex cover number $\vc$ of the seat graph in the parameter, the problem remains W[1]-hard.
\begin{corollary}
	\textsc{\problemname} remains W[1]-hard parameterized by $k + \vc$ where $\vc$ denotes the vertex cover number of the seat graph, even when the number of preference values is $4$.
\end{corollary}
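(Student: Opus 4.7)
The plan is to observe that the reduction constructed in the proof of Theorem~\ref{thm:w1:hard} already witnesses all the additional properties claimed by the corollary, so no separate reduction is needed; we only have to verify two bounds on the constructed instance.

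First, I would count the distinct values appearing in the preference table $\preferencefamily_\agents$ given in Table~\ref{tab:preferences}. The entries use only $-\largeval$, $-1$, $0$, and $1$, which is exactly $4$ distinct values. This part is a direct inspection of the table; no additional argument is needed.

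Second, I would exhibit a small vertex cover of the seat graph $G$. Recall that $G$ consists of a clique on $C_1 \cup C_2$ of size $2k+2$, a star with center $x_1$ and leaves $V_H$, and a star with center $x_2$ and leaves $Y$. A vertex cover of a clique on $t$ vertices needs $t-1$ vertices, so choosing any $2k+1$ of the $2k+2$ clique vertices covers every edge inside the clique. Adding the two star centers $x_1$ and $x_2$ covers every edge of the two stars. Hence $S \defeq (C_1 \cup C_2) \setminus \{v\} \cup \{x_1, x_2\}$, for an arbitrary $v \in C_1 \cup C_2$, is a vertex cover of $G$ of size $2k+3$, giving $\vc(G) \le 2k + 3$.

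Combining the two observations, the same reduction from \textsc{Independent Set} that proved Theorem~\ref{thm:w1:hard} yields \textsc{\problemname} instances whose vertex cover number $\vc$ and parameter $k$ satisfy $k + \vc \le 3k + 3$, so $k + \vc$ is bounded by a function of $k$ alone. Therefore a hypothetical FPT algorithm parameterized by $k + \vc$ would give an FPT algorithm parameterized by $k$, contradicting Theorem~\ref{thm:w1:hard}. Since the number of preference values in the constructed instance is $4$, the same conclusion holds under that restriction, which establishes the corollary. No step here is a real obstacle: the whole argument is a routine accounting on the instance already built, and the only thing to be careful about is to verify that every edge of $G$ (both inside the clique and in the two stars) is indeed covered by the set $S$ above.
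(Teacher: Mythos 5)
Your proposal is correct and follows essentially the same route as the paper: both arguments simply re-examine the reduction of Theorem~\ref{thm:w1:hard}, observe that the preference table uses only the four values $-\largeval,-1,0,1$, and exhibit the vertex cover consisting of $2k+1$ vertices of the clique $C_1\cup C_2$ together with the star centers $x_1$ and $x_2$, so that the combined parameter $k+\vc\le 3k+3$ is bounded by a function of $k$ and the reduction remains a valid parameterized reduction. Your slightly more explicit accounting of why this transfers W[1]-hardness is fine and adds nothing beyond what the paper already asserts.
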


Moreover, it is not difficult to see that \textsc{\problemname} can be solved in time $n^{\cO(k)}$ by brute force. We simply guess all sets of $k$ pairs of agents, swap their assignments, and then verify whether or not the resulting assignment has a blocking pair. On the other hand, the value of the parameter $k + \vc$ in the above reduction is \emph{linear} in the value of the parameter of the \textsc{Independent Set} instance.
Since \textsc{Independent Set} does not have an $n^{o(k)}$ time algorithm unless \ETH{} fails \cite{CHKX2004}, this implies that the runtime of this naive brute force algorithm is in some sense tight under  \ETH{} --- even when the vertex cover number of the seat graph can be considered another component of the parameter, and even when we only have $4$ different choices for values of preferences.
\begin{corollary}
	\textsc{\problemname} can be solved in time $n^{\cO(k)}$, and there is no $n^{o(k + \vc)}$ time algorithm, where $\vc$ denotes the vertex cover number of the seat graph, even when the number of preference values is $4$, unless \ETH{} is false.
\end{corollary}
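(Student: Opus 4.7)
The plan is to establish the two complementary halves of the corollary in turn, both of which are essentially already sketched in the paragraphs preceding the statement, so the work is chiefly one of verification.

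For the upper bound I would run the naive brute force: enumerate every set of at most $k$ unordered pairs of agents, apply the corresponding swap sequence to $\arrangement$, and for each resulting candidate arrangement $\arrangement'$ check stability in $\cO(n^2)$ time by scanning all ordered pairs of agents for a blocking pair. There are at most $\binom{\binom{n}{2}}{k} = n^{\cO(k)}$ such sets, giving a total running time of $n^{\cO(k)}$. (An equivalent formulation, which avoids re-examining the same $\arrangement'$ multiple times, is to guess the at most $2k$ agents whose assignments change together with the bijection among them, giving the same bound $\binom{n}{2k}(2k)! = n^{\cO(k)}$.)

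For the lower bound under \ETH{} I would replay the reduction of Theorem~\ref{thm:w1:hard} and extract two quantitative properties of the produced \textsc{\problemname} instance $\cI = (\agents, \preferencefamily_\agents, G, \arrangement, k)$. First, a direct inspection of Table~\ref{tab:preferences} shows that the only preference values that actually occur are $-\largeval$, $-1$, $0$, and $1$, so the instance uses exactly four distinct preference values. Second, the seat graph $G$ is the vertex-disjoint union of a clique on $C_1 \cup C_2$ of size $2k+2$, the star of $n+1$ vertices centered at $x_1$, and the star of $k+2$ vertices centered at $x_2$. Taking all but one vertex of the clique together with $x_1$ and $x_2$ yields a vertex cover of $G$ of size $2k+3$, so $\vc(G) \le 2k+3$, and consequently $k + \vc(G) = \cO(k)$.

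Assume for contradiction that an algorithm decides \textsc{\problemname} in time $n^{o(k + \vc)}$. Applying it to $\cI$ and using that $|\agents| = n + 3k + 5 = \cO(n)$ together with $k + \vc(G) = \cO(k)$ gives an algorithm that decides \textsc{Independent Set} on $(H,k)$ in time $n^{o(k)}$, contradicting the classical lower bound of Chen, Huang, Kanj and Xia~\cite{CHKX2004} under \ETH. The main obstacle, such as it is, is purely bookkeeping: one only needs to read off the four distinct preference values from Table~\ref{tab:preferences} and to verify the linear bound on $\vc(G)$ in terms of $k$, after which everything else is inherited from the already-established W[1]-hardness reduction.
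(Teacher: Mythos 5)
Your proposal is correct and follows essentially the same route as the paper: the $n^{\cO(k)}$ upper bound by brute-force enumeration of candidate swap sets (with your refinement of guessing the at most $2k$ affected agents handling the order-dependence of overlapping swaps), and the lower bound by observing that the reduction of Theorem~\ref{thm:w1:hard} uses only the four preference values $-\largeval,-1,0,1$ and produces a seat graph with vertex cover number at most $2k+3$, so that an $n^{o(k+\vc)}$ algorithm would contradict the \ETH{} lower bound for \textsc{Independent Set}. No gaps.
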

\section{Conclusion}
In this paper, we embark on a new model of hedonic games, called \textsc{Seat Arrangement}. The proposed model is powerful enough to treat real-world topological preferences. The results of the paper are summarized as follows: (1) We obtained basic results for stability and fairness. In particular, we proved that the $\pof$ is unbounded for the nonnegative case and we gave an upper bound  $\tilde{d}(G)$  and an almost tight lower bound $\tilde{d}(G)-1/4$ for the binary case. 
(2) We presented the dichotomies of the computational complexity of four \textsc{Seat Arrangement} problems in terms of the order of components. (3) We proved that {\sf UTA} can be solved in time $n^{O(\vc)}$ where $\vc$ is the vertex cover number whereas it is W[1]-hard for $\vc$ and cannot be solved in time $n^{o(n)}$ and $f(\vc)n^{o(\vc)}$, respectively, under ETH. Furthermore, {\sf EGA} and symmetric {\sf EFA} are weakly NP-hard even on graphs with $\vc=2$.
(4) We proved that {\sf Local $k$-STA} is W[1]-hard when parameterized by $k+\vc$ and cannot be solved in time $n^{o(k+\vc)}$ under $\ETH$, whereas it can be solved in time $n^{O(k)}$.

Finally, we give further directions for this research. First, our complexity analysis of \textsc{Seat Arrangement} is fundamental. Since \textsc{Seat Arrangement} has various input parameters and desirable solution concepts, it would be worth investigating the more detailed analysis of the problem.
After the extended abstract of this paper was published, several papers indeed studied the \textsc{Seat Arrangement} model. For example, Ceylan, Chen, and Roy conducted a more detailed study of the parameterized complexity of \textsc{Seat Arrangement} by considering the parameters: the number of non-isolated vertices and the maximum degree of the seat graph~\cite{SA:Ceylan0R23}. Berriaud, Constantinescu, and  Wattenhofe discussed the complexity of \textsc{Seat Arrangement} on very restricted graph classes such as cycles and paths when the number of classes of agents is bounded or the preferences are restricted~\cite{SA:BerriaudCW23}. 
In line with these studies, it is intriguing to consider extending the tractable or intractable cases for \textsc{Seat Arrangement}. Also, designing approximation algorithms is another approach to addressing the problem.

Another direction for \textsc{Seat Arrangement} is considering other solution concepts. In the context of stable matching or hedonic games, a lot of solution concepts are proposed~\cite{Aziz2013}. Importing these solution concepts to \textsc{Seat Arrangement} would accelerate further research.
In this line, Wilczynski recently considered \textsc{Seat Arrangement} under \emph{ordinal} preferences and investigated the stability and popularity~\cite{SA:Wilczynski23}. 

Lastly, several variants of \textsc{Seat Arrangement} can be considered by changing the objective function.
When assessing the average happiness of an agent based on their neighbors, we can incorporate the notion of \emph{fractionality} from fractional hedonic games~\cite{Bilo2018,AzizBBHOP19,HanakaIO25} into \textsc{Seat Arrangement}. This would involve defining the fractional utility divided by the degree of the vertex to which the agent is assigned. As mentioned in the introduction, the fractional utility is also used for  \textsc{Schelling Games}~\cite{SG:AgarwalEGISV21,SG:ChauhanLM18,SG:BiloBLM22,KreiselBFN24}.
Furthermore, we can take into account the influence of neighbors' neighbors in  \textsc{Seat Arrangement}. Such a consideration can be addressed by using the notion of social distance games~\cite{Branzei2011,Okubo2019,BullingerS24}.  

In this way, there are various directions for \textsc{Seat Arrangement}, and a diverse range of research can be expected in future work.


\bibliographystyle{plainurl}
\bibliography{ref}


\end{document}